\def\BibTeX{{\rmfamily B\kern-.05em{\scshape i\kern-.025em b}\kern-.08em \TeX}}
\newtheorem{proposition}{Proposition}
\newcommand{\ben}{\begin{equation}}
\newcommand{\een}{\end{equation}}
\newcommand{\beq}{\begin{eqnarray}}
\newcommand{\eeq}{\end{eqnarray}}
\begin{document}

\title{Fully Decentralized Peer-to-Peer Community Grid with Dynamic and Congestion Pricing}
\author{Hien Thanh Doan, Truong Hoang Bao Huy, Daehee Kim~\IEEEmembership{Member,~IEEE} and Hongseok Kim~\IEEEmembership{Senior Member,~IEEE}}

\maketitle
\begin{abstract}
Peer-to-peer (P2P) electricity markets enable prosumers to minimize their costs, which has been extensively studied in recent research. However, there are several challenges with P2P trading when physical network constraints are also included. Moreover, most studies use fixed prices for grid power prices without considering dynamic grid pricing, and equity for all participants. This policy may negatively affect the long-term development of the market if prosumers with low demand are not treated fairly. An initial step towards addressing these problems is the design of a new decentralized P2P electricity market with two dynamic grid pricing schemes that are determined by consumer demand. Futhermore, we consider a decentralized system with physical constraints for optimizing power flow in networks without compromising privacy. We propose a dynamic congestion price to effectively address congestion and then prove the convergence and global optimality of the proposed method. Our experiments show that P2P energy trade decreases generation cost of main grid by 56.9\% compared with previous works. Consumers reduce grid trading by 57.3\% while the social welfare of consumers is barely affected by the increase of grid price.
\end{abstract}
\begin{keywords}
Decentralized electricity market, peer-to-peer energy trading, peer-to-grid, physical constraint, main grid dynamic pricing
\end{keywords}

%
\IEEEpeerreviewmaketitle

\section{Introduction}
In recent years, there have been significant interests in research on distributed energy resources (DERs), such as renewable energy resources (RESs) in power systems. Each node in a distribution network is equipped with smart devices capable of exchanging information and switching to appropriate software devices, such as smart meters and energy management systems (EMSs). It enables flexible scheduling, monitoring, and sharing of energy usage information in a distribution network, and encourages prosumers to participate in energy trading on a proactive basis. A variety of energy market development projects have been implemented in distribution systems, such as SonnenCommunity in Germany \cite{SonnComm}, Brooklyn in the USA \cite{BroMicr}, and Piclo in the UK \cite{Piclo} to enable prosumers to utilize their DERs. All of these are expected to contribute to managing and optimizing energy resources in the future.

In practice, prosumers often buy/sell electricity through retailers due to the daily fluctuation of grid price. The safety and the stability of the power system is also a major concern when transmitting electricity directly. However, this can change in the next generation of the power grid for a number of reasons, such as the development of EMS that enables each prosumer to manage its own risks under grid price fluctuations. Besides, with the development of a management system, each node on the power system can manage its own safety and stability, as well as service fees \cite{soto2021peer}. It can also operate with other nodes in the whole system without depending on retailers. Therefore, this study aims at designing the future electricity market that allows prosumers to transact energy directly with other prosumers and also with the main grid.
 
A peer-to-peer electricity market (P2PEM) is a new type of market that allows surpluses and deficits among network peers to be directly traded. The P2PEM provides cost-saving, autonomy, transparency, and competition \cite{aggarwal2021survey} to each participant. In order to provide cost-saving opportunities, \cite{abeygunawardana2021grid} indicates that appropriate P2PEM should encourage prosumers to remain involved. Therefore, it is essential to devise a long-term market mechanism that supports fairness and incentives for energy trading among participants.

Early studies have considered different P2P trading negotiation mechanisms such as centralized, decentralized, and auction-based approaches \cite{azim2019feasibility}. The centralized mechanism has a centralized transaction process and an information-sharing manner. The auction-based mechanism \cite{doan2021peer} is an approach in which prosumers relay the information to an aggregator to maximize their profits. However, relaying data to the central entity may leak the concern of privacy. One solution could be using a decentralized algorithm that uses limited information exchange and matches prosumers directly \cite{li2021data}. In the decentralized mechanism, each prosumer negotiates with another prosumer to find an optimum solution based on their preferences, which is the focus of this paper.

There has been a variety of research on decentralized P2P trading. In \cite{khorasany2021lightweight}, a market is proposed for grid-connected prosumers. In \cite{van2020integrated,doan2022optimal}, an optimization model was presented for prosumers, using an application of the alternating direction method of multipliers (ADMM) algorithm for matching prosumers. In \cite{doan2022optimal}, the authors proposed an electricity market by considering optimization problem separately and independently each time. Time-coupling constraint in multiple time is studied in \cite{fernandez2021bi} for battery. Although all of the above articles focused on grid-connected prosumers, they all have the same limitation in that the price offered by main grid is the predetermined one, which will be called a \textit{fixed} price hereafter. Of course, the time-of-use (TOU) price changes during a day, but not directly related with amount of the load in community grid in that time slot. This necessitates new grid pricing mechanisms between the main grid and the community grid to encourage or discourage electricity consumption dynamically, which is called \textit{dynamic grid pricing} hereafter.

Indeed, the dynamic grid pricing was incorporated in several studies. In \cite{luo2021multiple}, an energy trading scheme for different stakeholders at multiple levels was proposed. In \cite{chen2020dynamic,li2021real}, demand response was applied to minimize energy costs based on the preferences of market participants and to reduce the effects of dynamic grid price. The authors suggested a new objective function for prosumers to improve prosumer's utility \cite{li2021real} or satisfaction \cite{jafari2020fair}. Also in \cite{jafari2020fair}, the authors considered an equitable allocation  of profits among microgrids (MGs). However, these studies only considered social welfare maximization (SWM) or energy cost minimization in a \textit{virtual} layer. As a consequence, physical constraints such as line losses, voltage variation and congestion are neglected, so is the fairness. 

Several works have focused on considering these physical constraints in P2PEM; a decentralized market was studied using Nash bargaining to solve alternating current optimal power flow (AC-OPF); a branch-flow model was relaxed in second-order cones to resolve the non-convex problem with guaranteed exactness for radial grid topologies \cite{low2014convex}. However, the Nash bargaining based studies \cite{kim2019direct,zhong2020cooperative} can be impractical because it requires an honest report about the increased revenue of each prosumer. A newly added study in \cite{tofighi2022decentralized} attempted to solve both privacy concerns and congestion management but neglected reactive power and voltage constraints.


In this regard, we address the above concerns by answering the following fundamental questions: 1) how to design an electricity market that maximizes social welfare while the main grid adjusts the price based on the community load, 2) what are the criteria for designing dynamic grid price in P2P trading, considering its impact on sustainability and stability in market development, 3) how to minimize the cost incurred in trading while keeping the privacy of each node and considering physical constraints, 4) how to impose a P2P trading fee to manage congestion, which is called \textit{congestion pricing} hereafter. The main contributions of this paper are as follows:

\begin{itemize}
\item We propose a decentralized P2P community market enabling dynamic pricing in a grid-connected environment with physical constraints. We conduct a mathematical formulation of the proposed P2P market with the objective of SWM, which is solved in a decentralized manner using ADMM.

\item A novel two-stage operation model is proposed to guarantee physical constraints and achieve optimality. In the first stage, decentralized P2P trading in a virtual layer is performed as an iterative procedure using ADMM. Upon reaching a virtually optimal P2P matching, the second stage finds a physical way that enables the P2P transactions with minimum power  losses and voltage fluctuations by solving a decentralized AC-OPF using ADMM. We prove that the proposed congestion pricing maximizes social welfare by coordinating the virtual layer and the physical layer iteratively while maintaining the privacy of each node in the power system, see \textit{Proposition~\ref{prop:1}}. Fig.~\ref{fig:overview} summarizes the proposed framework encompassing the virtual layer and the physical layer.

\item We present and investigate two grid pricing schemes, unique price scheme (UPS) and differential price scheme (DPS), to guarantee fairness among participants which encourage prosumers to participate in the market actively. It is important to note that fairness is essential for the long-term market mechanism and active participation of prosumers. 

\item Our proposed decentralized P2P market mechanism is analyzed in detail with various case studies under realistic configurations. Compared to the existing approaches, the proposed method is shown to decrease the generation cost of main grid by 56.9{\%}. In addition, the proposed method reduces the energy consumption from non-RES, such as main grids, by 57.3\%; thus, it is more environmentally friendly than other methods. Finally, the AC-OPF solution ensures the optimal power flow of the power systems, while a small number of rounds handle congestion.
\end{itemize}

The remainder of this paper is organized as follows. In Section~\ref{sec:system}, we present a system model for our work. In Section~\ref{sec:dynamic_price}, we formulate an energy trading pricing model and a congestion penalty rule. Section~\ref{sec:2-stage} describes the decentralized market within the context of a power system and examines the operational model. Section~\ref{sec:performance} presents the results and analysis of the case studies while Section~\ref{sec:conclusion} concludes this study.
\begin{figure}[t]
\centering
\includegraphics[width=\columnwidth]{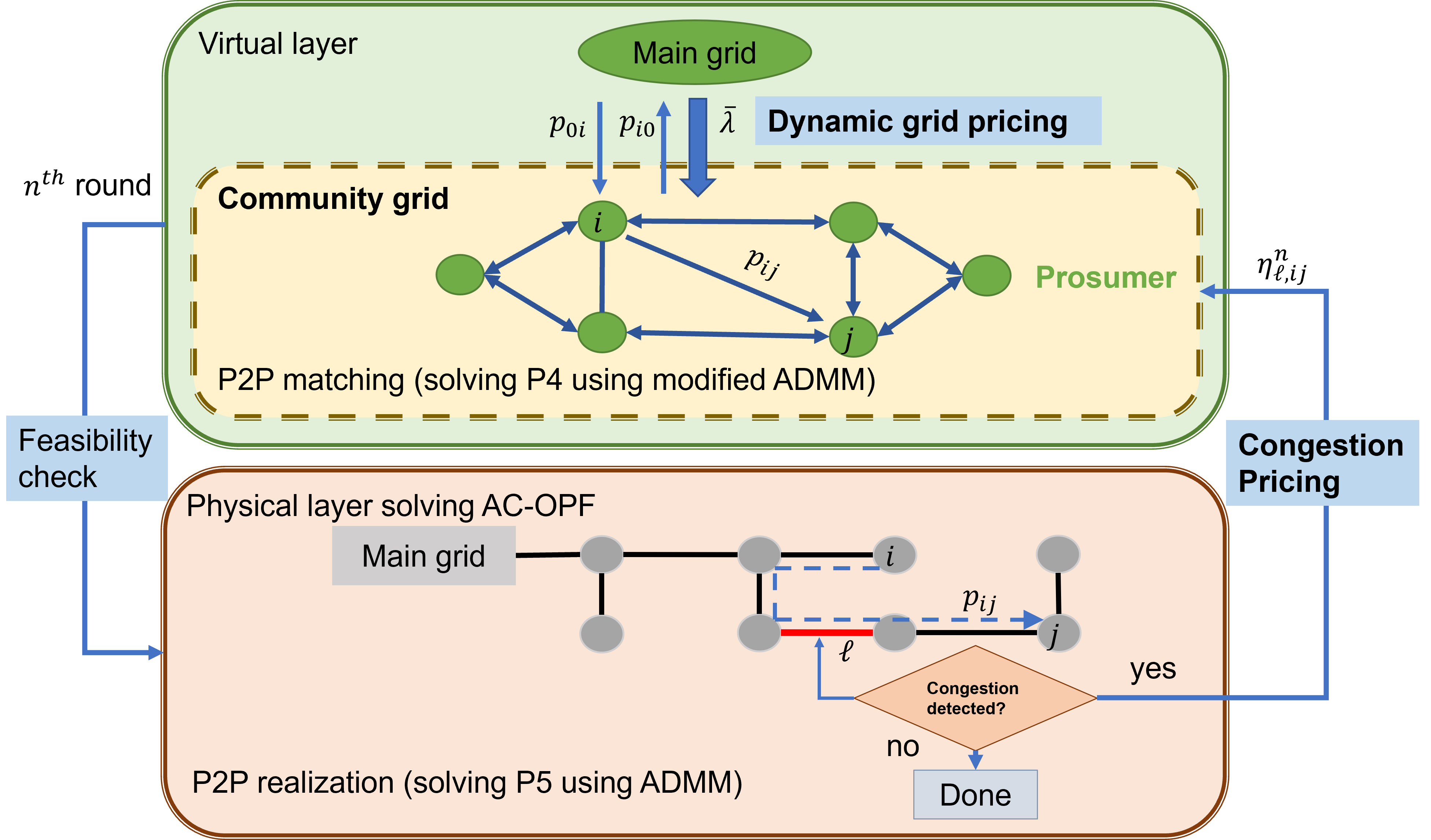}
\caption{Overview of electricity market systems.}
\label{fig:overview}
\end{figure}
\section{System Model}\label{sec:system}
\subsection{Structure of community grid}
Prosumers participating in a P2P market are assumed to be non-strategic and rational. A prosumer can be a consumer when generation capacity is less than demand. Consumers purchase energy from producers or the main grid. On the other hand, prosumer can be a producer when generation exceeds demand. Producers sell energy to consumers or the main grid. To handle all these activities, each prosumer is assumed to have a demand response (DR)-enabled smart meter that can record prosumer generation data, demand, and thus manage the amount of sold or purchased energy in P2P market. The market is an hourly ahead market where prosumers negotiate trades for the next time slot. Since we focus on a single time slot, we omit the time slot index.
\subsection{Distribution network model}
Let us consider a low-voltage (LV) distribution grid given by the undirected connected graph, $\mathcal{G}=(\mathcal{N},\mathcal{L})$ as shown in Fig.~\ref{fig:power_flow}. Here, $\mathcal{N}$ denotes a set of nodes indexed by $i=0,1,\ldots,|\mathcal{N}|$ and $\mathcal{L}$ denotes a set of lines connecting those nodes indexed as $\ell=1,\ldots,|\mathcal{L}|$. The slack bus (root node) has a zero index. Each node $i$ has a parent (ancestor), denoted as $\mathcal{A}_i$. We consider a radial distribution network, and thus each node has only one parent. A set of children of node $i$ is denoted by $\mathcal{C}_i$, indexed by $k=1,\ldots,|\mathcal{C}_i|$. Since we are considering a radial network, each line $\ell\in\mathcal{L}$ can be uniquely indexed by its connected child. Hence, we use the same notation $i$ to denote a node or a connected line towards the root node, unless $\ell$ needs to be specified for congestion pricing later.
\begin{figure}
\centering
\includegraphics[width=\columnwidth]{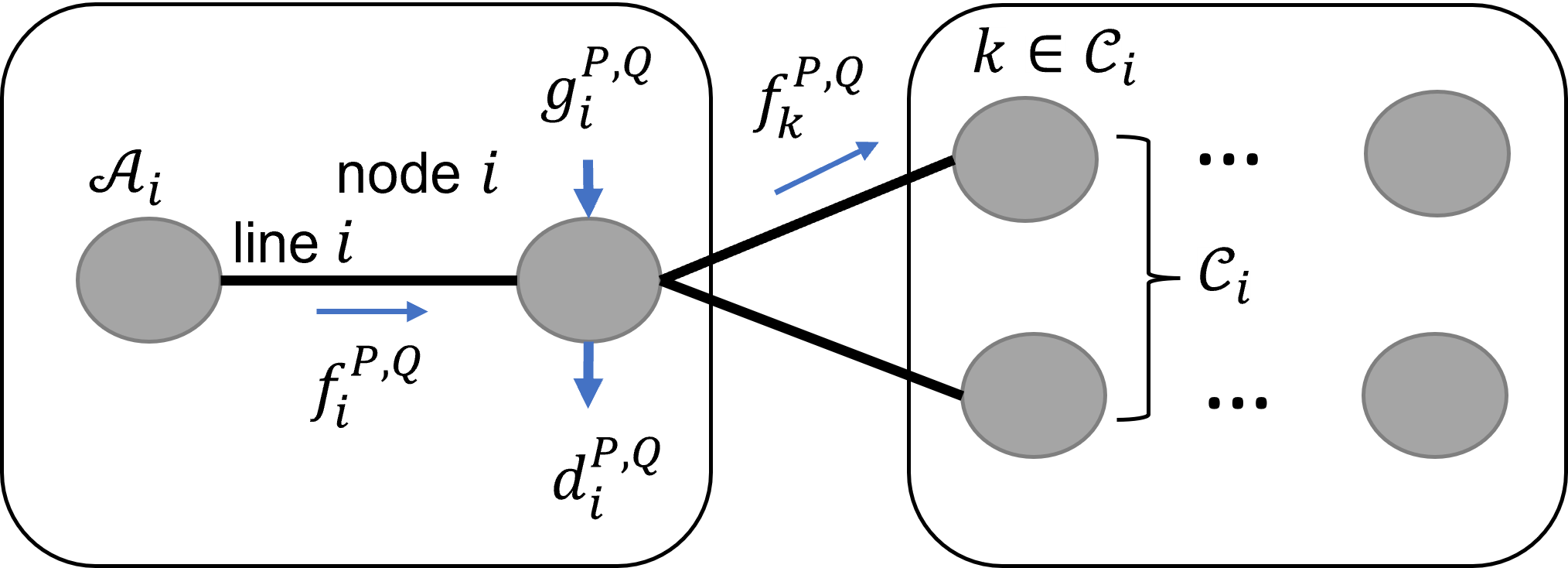}
\caption{A distribution network for P2P electricity market and power flow between participants.}
\label{fig:power_flow}
\end{figure}

Let $I_i$ be the current flow from the parent $\mathcal{A}_i$ to node $i$. In addition, $d_i^P$ and $g_i^P$ represent active power of the consumer and producer, while $d_i^Q$ and $g_i^Q$ represent the reactive powers of the consumer and producer, respectively. $f_i^P$ and $f_i^Q$ represent the active and reactive power flows, respectively, of line $i$. Let $V_i$ be the voltage at node $i$. Then, the squared voltage at node $i$ is represented by $v_i=|V_i|^2$, and the squared current is represented by $l_i=|I_i|^2$. $v_{i}^{\min}$ and $v_{i}^{\max}$ are the minimum and the maximum squared voltage angles. Resistance $r_i$ and reactance $x_i$ are characterized for each line. $S^{\max}_{i}$ is the maximum capacity of line $i$. In an AC-OPF of the radial network, these quantities can be related using a LinDistFlow \cite{mieth2019distribution}. The equations are as follows. For $\forall i\in \mathcal{N}$, we have
\begin{subequations}
\label{eq:1}
\begin{align}
f_i^P+g_i^P-\sum_{k\in \mathcal{C}_i}(f_k^P+r_{k}l_{k}) &=& d_i^P, \label{eq:1a}\\
f_i^Q+g_i^Q-\sum_{k\in \mathcal{C}_i}(f_k^Q+r_{k}l_{k}) &=& d_i^Q, \label{eq:1b}\\
v_i+2(r_{i}f_k^P+x_{i}f_k^Q)+l_{i}(r_{i}^2+x_{i}^2) &=& v_{\mathcal{A}_i}, \label{eq:1c}\\
(f_{i}^P)^2+(f_{i}^Q)^2 &\leq& (S^{\max}_{i})^2. \label{eq:capacity limit} \\
\frac{(f_{i}^P)^2+(f_{i}^Q)^2}{v_{i}} &=& l_{i}, \label{eq:1d} \\
v_{i}^{\min} \leq v_{i} \leq v_{i}^{\max}. \label{eq:1f}
\end{align}
Equation (\ref{eq:1}) defines the distribution network in the node variable set \{$d_i^P,g_i^P,d_i^Q,g_i^Q,v_i,l_i|i\in\mathcal{N}$\}. Equation (\ref{eq:1d}) being a non-convex constraint, can be relaxed to an inequality using a second-order cone \cite{low2014convex},
\begin{align}
||2f_{i}^P,2f_{i}^Q,v_{i}-l_{i}||_2 \leq v_{i}+l_{i}. \label{eq:non-linear-constraint}
\end{align}
\end{subequations}
\subsubsection{Cost function}
Producers engage in the market to maximize their benefits, where they try to sell their energy at a beneficial price to consumers or the main grid. Let $C_i(g_{i})$ represent the cost when a prosumer $i$ generates an amount of energy $g_{i}$ \cite{mehdinejad2022peer}.\footnote{We use $g_i$ instead of $g_i^P$ for notational simplicity hereafter.} The formula can be expressed as
\begin{subequations}
\label{eq:Producer cost function with constraint}
\begin{align}
C_i(g_{i})=b_{i}g_{i}+a_{i}g_{i}^2, \label{eq:Producers Cost Function} \\
g_{i}^{\min}\leq g_{i}\leq g_{i}^{\max}, & \quad \forall i\in \mathcal{N}, \label{eq:Producers energy limit}
\end{align}
\end{subequations}
where $a_i\geq0$ is to the dynamic cost of energy generation in \textdollar$/MWh^2$, $b_i>0$ is to the producer's minimum selling price in \textdollar$/MWh$, and $g_{i}^{\min}$ and $g_{i}^{\max}$ represent the minimum and the maximum amounts of energy generation in $MWh$.
\subsubsection{Utility function}
Consumers are willing to pay money to purchase energy from producers or the main grid. The responses of different consumers to various scenarios can be modeled using the concept of utility function. As done in \cite{yang2022hierarchical}, each consumer $i$ has level of satisfaction when it consumes $d_i$ amount of energy.\footnote{We use $d_i$ instead of $d_i^P$ for notational simplicity hereafter.}
\begin{subequations}
\label{eq:Consumer cost function with constraint}
\begin{align}
U_i(d_i) =
  \begin{cases}
    \beta_{i}d_{i}-\alpha_{i}{d_i}^2 & \quad \text{if } 0 \leq d_{i}\leq \frac{\beta_{i}}{2\alpha_{i}}\\
    \frac{\beta_{i}^2}{4\alpha_{i}}  & \quad \text{if } d_{i} > \frac{\beta_{i}}{2\alpha_{i}}
  \end{cases},
\label{eq:Consumer Utility Function}
\end{align}
\begin{equation}
d_{i}^{\min}\leq d_{i}\leq d_{i}^{\max}, \forall i\in \mathcal{N}, \label{eq:Consumer energy limit}    
\end{equation}
\end{subequations}
where $\alpha_i>0$ is consumer $i$'s satisfaction with energy consumption in \textdollar$/MWh^2$, and $\beta_i>0$ is the consumer's maximum buying price in \textdollar$/MWh$. Consumer $i$ can buys energy $d_{i}$ from producers and main grid, where $d_{i}^{\min}$ and $d_{i}^{\max}$ represent the minimum and the maximum required electricity in $MWh$ of the consumer $i$, respectively. More specifically, $d_{i}^{\min}$ represents a realistic load demand for fixed loads, whereas $d_{i}^{\max}$ indicates the load demand for fixed and flexible loads.
\subsubsection{Main grid cost function}
Main grid can supply or absorb power at any given time due to the mismatch between generation and demand in P2P markets. The cost of energy trading with main grid can be modeled as a quadratic function \cite{luo2021multiple,chen2020dynamic,li2021real,jafari2020fair}. However, unlike the previous works, the cost function of the proposed method is calculated by coefficient parameters $a_0\geq0, b_0>0$ and predefined by main grid at each time slot. The cost of main grid to supply $p_{0}$ amount of power is given by 
\begin{equation}
C_0(p_0)=b_{0}p_{0}+a_{0}p_{0}^2, \label{eq:main grid Cost Function} 
\end{equation}
where $a_0$ denotes dynamic cost in \textdollar$/MWh^2$, and $b_0$ represents the minimum price in \textdollar$/MWh$. Since the node $0$ is a slack bus, we assume that $p_0$ does not have the maximum or the minimum constraints. 
\section{Problem Formulation}\label{sec:dynamic_price}
\subsection{Grid Dynamic Pricing} \label{subsec:PG_model}
According to (\ref{eq:Producers Cost Function}), (\ref{eq:Consumer Utility Function}), (\ref{eq:main grid Cost Function}) the objective of the electricity market in virtual layer can be expressed as

\vspace{2mm}
\noindent \textbf{P1: Utility Maximization}
\begin{align}
\max \sum_{i\in\mathcal{N}} [U_i(d_i)-C_i(g_i)]-C_0(p_0), \label{eq:objective of the electricity market}
\end{align}
and \textbf{P1} can be solved by decentralized optimization, as in previous works \cite{tofighi2022decentralized,kim2019p2p,khorasany2019decentralized}. In this study, however, we reformulate (\ref{eq:objective of the electricity market}) by converting the quadratic cost function of $C_0(p_0)$ of the main grid into a linear function, which has several benefits compared to other studies. First, in practice, Feed-in Tariff (the selling price to the grid) is fixed at all times of day \cite{eseye2017grid} or fixed at each time slot. In the context of this study, this corresponding to $b_0>0$ and $a_0=0$. Hence, the main grid can publicize the selling price $b_0$ before the P2P electricity market starts, and producers can solve these problems without communicating the main grid. Accordingly, the previous studies lack efficiency, primarily because producers are required to communicate with the grid at each iteration \cite{tofighi2022decentralized,kim2019p2p,khorasany2019decentralized}. Second, the pricing scheme implemented by the main grid can be categorized into two approaches, enabling the establishment of distinct incentive prices among prosumers. This contributes to a greater level of fairness within the community grid.

To do that we first redefine (\ref{eq:Producers Cost Function}) and (\ref{eq:Consumer Utility Function}) to a welfare function. Let $p_{i0}$ denote the energy transfer from prosumer $i$ to the main grid (indexed by $0$ as a slack bus). Similarly, let $p_{ij}$ denote the energy transfer from prosumer $i$ to prosumer $j$. The energy exchanged during the P2P process of prosumer $i$ is expressed as
\begin{equation}
g_i-d_i=p_{i0}+\sum_{j\in \omega_i} p_{ij},
\label{eq:energy exchanged of prosumer}
\end{equation}
where $\omega_i$ denotes a set of prosumers to whom the prosumer $i$ sells. Then, the welfare function of a prosumer $i\in \mathcal{N}$ is defined as
\begin{align}
\begin{split}
W_i(d_i,g_i,p_{i0})&=U_i(d_i) - C_i(g_i)\\
&\ -\overline\lambda \max (p_{0i},0) +\underline\lambda \max (p_{i0},0), 
\label{eq:social welfare function of prosumer}
\end{split}
\end{align}
where $\bar\lambda$ is the buying price from the main grid, and $p_{0i}$ is the amount of energy from the main grid to the prosumer $i$. Note that $p_{0i}=-p_{i0}$ and $p_{ij}=-p_{ji}$. Similarly, $\underline\lambda$ is the selling price to the main grid.

To develop dynamic grid pricing, we factor out the cost function of main grid in (\ref{eq:main grid Cost Function}) as
\begin{equation}
C_0(p_0)=p_{0}(b_{0}+a_{0}p_{0}).
\label{eq:reformulate cost of pw}
\end{equation}
Then, we see that the term ($b_{0}+a_{0}p_{0}$) can serve as a price, which dynamically depends on $p_0$. Since $p_0=\sum_{i\in\mathcal{N}} p_{0i}$, we set the dynamic grid price $\overline\lambda$ as
\begin{equation}
\overline\lambda=b_0+a_0\sum_{i\in \mathcal{N}} p_{0i},
\label{eq:unique selling price}
\end{equation}
which implies that the buying price from the grid varies depending on the total community load $p_0$. Note that, however, all prosumers have the \textit{same} buying price in this case, and we call it \textbf{unique pricing scheme (UPS)}. The application of the UPS proposed in this paper and other relevant studies \cite{luo2021multiple,mehdinejad2022peer} may lead to potential market conflicts concerning the distribution of benefits between electricity consumers. Specifically, conflicts may arise between households characterized by low electricity consumption, and commercial/industrial consumers with high electricity consumption. This results in a situation where low-electricity consumers are compelled to pay higher prices due to the influence of high-consumption consumers and can seriously affect the long-term development of the community grid exchange market. 

Hence, we consider another pricing scheme called \textbf{differential pricing scheme (DPS)}, where each prosumer $i$ has it own buying price $\overline\lambda_{i}$ depending on its own load $p_{0i}$ such as
\begin{equation}
\overline\lambda_{i}=b_0+a_0 p_{0i}.
\label{eq:diff selling price}
\end{equation}
For realistic simulations, the selling price to main grid $\underline\lambda$ is fixed during a day \cite{eseye2017grid} or fixed at each time slot.
\subsection{Social Welfare Maximization of Community Grid} \label{subsec:SWM}
The objective of the community grid in \textbf{P1} can be reformulated as \textbf{P2} using (\ref{eq:1}) and (\ref{eq:social welfare function of prosumer}). Note that \textbf{P2} aims to maximize social welfare for community grid.

\vspace{1mm}
\noindent \textbf{P2: Social Welfare Maximization}
\begin{align}
\max& \sum_{i\in\mathcal{N}}  W_i(d_i,g_i,p_{i0}) \label{eq:SWM without penalty}\\
\text{s.t } & (\ref{eq:Producers energy limit}), (\ref{eq:Consumer energy limit}), (\ref{eq:energy exchanged of prosumer}) \text{  :virtual layer}, \nonumber \\
& (\ref{eq:1a}), (\ref{eq:1b}), (\ref{eq:1c}), (\ref{eq:capacity limit}), (\ref{eq:1f}), (\ref{eq:non-linear-constraint}) \text{  :physical layer}, \nonumber \\
\text{variables: } & \{p_{ij}, p_{i0}, d_{i}, g_{i}, v_{i}, l_{i}|\forall i\in\mathcal{N}\}. \nonumber
\end{align}
To solve the problem of \textbf{P2}, we consider the capacity of line in (\ref{eq:capacity limit}) can be exceeded but then a penalty is imposed in proportion to the mount of violation. Consequently, prosumers seek to avoid the trading on the congested lines, and \textbf{P2} is reformulated as \textbf{P3} congestion relaxed given a congestion price $\eta_{\ell,ij}$ as follow.

\vspace{1mm}
\noindent \textbf{P3: Social Welfare Maximization with Congestion Relaxation}
\begin{align}
\max& \sum_{i\in\mathcal{N}} \biggr[ W_i(d_i,g_i,p_{i0}) - \sum_{\ell\in \mathcal{L}}\sum_{j\in\omega_i} \eta_{\ell,ij} |p_{ij}|\biggr] \label{eq:SWM}\\
\text{s.t } & (\ref{eq:Producers energy limit}), (\ref{eq:Consumer energy limit}), (\ref{eq:energy exchanged of prosumer}) \text{  :virtual layer}, \nonumber \\
& (\ref{eq:1a}), (\ref{eq:1b}), (\ref{eq:1c}), (\ref{eq:1f}), (\ref{eq:non-linear-constraint}) \text{  :physical layer}, \nonumber \\
\text{variables: } &  \{p_{ij}, p_{i0}, d_{i}, g_{i}, v_{i}, l_{i}| \forall i\in \mathcal{N},j\in\omega_i\}. \nonumber
\end{align}
In the next section, we will discuss how to determine and update the congestion pricing $\eta_{\ell,ij}$ using the proposed two-stage model.
\section{Decentralized Two-stage Electricity Market}\label{sec:2-stage}
In solving \textbf{P3}, we aim to solve the local problem of each prosumer using only P2P communications to ensure the data privacy of prosumers. Therefore, we propose decentralized two-stage electricity market for efficient P2P energy trading and AC-OPF based on ADMM. Once P2P matching reaches an optimal solution in a virtual layer, the second stage for AC-OPF is initiated to verify whether physical constraints are satisfied. The benefit of this approach is that a central authority is completely avoided, and the data are only shared with the corresponding neighbors. Recall that our two-stage approach consists of P2P matching in the virtual layer and P2P realization in physical layer as shown in Fig.~\ref{fig:overview}.

\subsection{Decentralized P2P Matching in Virtual Layer} \label{subsec:P2P_negotiation}
In energy dispatch, the market seeks to minimize the total cost or maximize the total profit of prosumers. To this end, we exploit the modified decentralized optimization based on ADMM in \cite{baroche2019prosumer}. Accordingly, \textbf{P3} is decomposed into sub-problems in \textbf{P4} (virtual layer), which is iteratively solved by each prosumer $i$ and AC-OPF problem in \textbf{P5} (physical layer), which updates the congestion pricing at each round of interaction. As shown in Fig.~\ref{fig:2stage_algorithm}, \textbf{P3} is iteratively solved by \textbf{P4} and \textbf{P5}. Let $n$ be the index of iterative rounds between the virtual layer (\textbf{P4}) and the physical layer (\textbf{P5}).

At the $n^{th}$ round, given $\eta_{\ell,ij}=\eta_{\ell,ij}^n$, the sub-problem \textbf{P4} in virtual layer and its decentralized solution provided by prosumer $i$ is as follows:

\vspace{2mm}
\noindent \textbf{P4: P2P Mathing in Virtual Layer using modified ADMM}
\begin{align}
\begin{split}
(\{{p_{ij}\},p_{i0}})^{t+1}=\underset{\{p_{ij}\},p_{i0}} {\arg\min} -W_i(d_i,g_i,p_{i0}) + \frac{(p_{i0}-p_{i0}^t)^2}{\rho} \\ +\sum_{\ell\in \mathcal{L}}\sum_{j\in\omega_i} \eta_{\ell,ij}^n |p_{ij}| +\sum_{j\in\omega_i} \biggr[\frac{\rho}{2}\Bigl(\frac{p_{ij}^t-p_{ji}^t}{2} - p_{ij}+\frac{\lambda_{ij}^t}{\rho}\Bigl)^2\biggr], 
\end{split}\label{eq:P2P-ADMM} 
\end{align}
\begin{equation}
\text{s.t } (\ref{eq:Producers energy limit}), (\ref{eq:Consumer energy limit}), (\ref{eq:energy exchanged of prosumer}). \nonumber
\end{equation}
Note that, given $\eta_{\ell,ij}^n$, \textbf{P4}, which is identical to \textbf{P3} without physical constraints of (\ref{eq:1a})--(\ref{eq:1c}), (\ref{eq:1f}), (\ref{eq:non-linear-constraint}), is iteratively solvable. As done in \cite{baroche2019prosumer}, at $t^{th}$ iteration in the virtual layer, each prosumer $i\in\mathcal{N}$ finds the optimal quantities ${(\{p_{ij}\},p_{i0})}^{t+1}$ in (\ref{eq:P2P-ADMM}) depending on the quantities provided by its partners $p_{ji}^t$. Note that $\lambda_{ij}^{t}$ is the price of P2P trading, which is updated by the ADMM method such as \cite{baroche2019prosumer}
\begin{equation}
\lambda_{ij}^{t+1}=\lambda_{ij}^t-\frac{\rho}{2}(p_{ij}^{t+1}+p_{ji}^{t+1}). \label{eq:price of each bilateral}
\end{equation}
\begin{figure}
\centering
\includegraphics[width=\columnwidth]{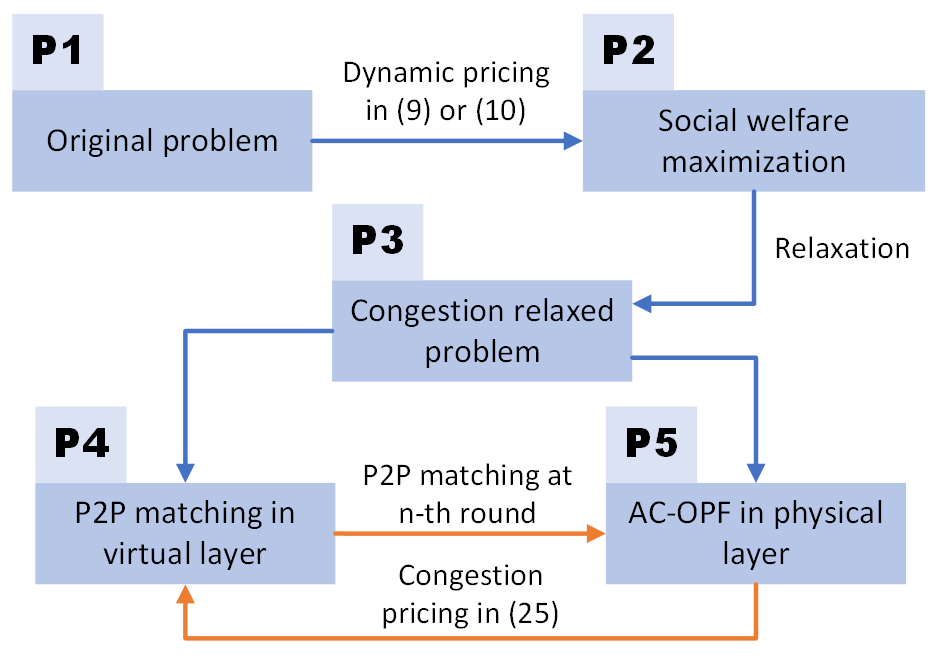}
\caption{Structure of problems in two-stage electricity market.}
\label{fig:2stage_algorithm}
\end{figure}
The convergence conditions are achieved by iteratively repeating the algorithm. Conditions (\ref{eq:primal value}) is evaluated as follows based on primal and dual residual values.
\begin{eqnarray}
||r_i^{t+1}||_2 \leq {\epsilon^2_{\text{pri}}}, ||s_i^{t+1}||_2 \leq {\epsilon^2_{\text{dual}}},\label{eq:primal value}
\end{eqnarray}
where $\epsilon_{\text{pri}}$ and $\epsilon_{\text{dual}}$ represent the primal and dual feasibility tolerances of a model, and their local primal and dual residuals are given by \cite{baroche2019prosumer}
\begin{eqnarray}
r_i^{t+1} =\sum_{j\in\omega_i} {(p_{ij}^{t+1}+p_{ji}^{t+1})}^2, s_i^{t+1} =\sum_{j\in\omega_i} {(p_{ij}^{t+1}-p_{ij}^t)}^2.\label{eq:local primal}
\end{eqnarray}
The P2P matching is summarized in \textbf{Algorithm 1}. Note that to improve the performance of ADMM, we adopt the step-size method of ADMM as described in \cite{boyd2011distributed} for both P2P matching and P2P realization.
\begin{algorithm}[t]
\caption{Decentralized P2P matching in virtual layer}\label{alg:cap}
\begin{flushleft}
        \parbox[t]{\dimexpr\linewidth-\algorithmicindent\relax}{%
        \setlength{\hangindent}{0pt}%
        \textbf{Input:} Given $\eta_{\ell,ij}^n, \forall \ell\in \mathcal{L}$ of \textbf{Algorithm 3};\\
        \textbf{Output:} $\{p_{ij}\}, p_{i0}, d_{i}, g_{i}$;\\
        \textbf{Initialization:} $t=0, p_{ij}^0=0, p_{i0}^0=0$, and $\lambda_{ij}^0=0, \forall i\in \mathcal{N},j\in\omega_i$;}\strut
\end{flushleft}
\begin{algorithmic}[1]
\While{(\ref{eq:primal value}) \textit{not satisfied}}
\State $t \leftarrow t+1$;
\State \parbox[t]{\dimexpr\linewidth-\algorithmicindent\relax}{%
    \setlength{\hangindent}{0pt}%
      Each prosumer $i$ solves (\ref{eq:P2P-ADMM}) and sends $p_{ij}$ to $j\in\omega_i$ and sends $p_{i0}$ to power grid;}\strut
\State \parbox[t]{\dimexpr\linewidth-\algorithmicindent\relax}{%
    \setlength{\hangindent}{0pt}%
      Grid receives $p_{i0}$ by each prosumer, it updates $\overline\lambda$ using UPS in (\ref{eq:unique selling price}) or DPS in (\ref{eq:diff selling price}) then sends $\overline\lambda$ back to prosumers;}\strut
\State \parbox[t]{\dimexpr\linewidth-\algorithmicindent\relax}{%
    \setlength{\hangindent}{0pt}%
      Each prosumer $i$ updates $\lambda_{ij}^{t}$ according to (\ref{eq:price of each bilateral}) and the local residuals in (\ref{eq:local primal});}\strut
\State \parbox[t]{\dimexpr\linewidth-\algorithmicindent\relax}{%
    \setlength{\hangindent}{0pt}%
      Each prosumer $i$ broadcasts its local residuals and receives the local residuals of it partners;\\
      \#Note that the residual in (\ref{eq:local primal}) is used to update ADMM step-size \cite{boyd2011distributed};}\strut
      \vspace{0.1em}
\EndWhile
\end{algorithmic}
\end{algorithm}

\subsection{Decentralized P2P Realization in Physical Layer} \label{subsec:P2P_realization}
In solving \textbf{P3}, we first solved \textbf{P4} in a virtual layer. But the congestion price $\eta_{\ell,ij}^n$ should be determined, and thus we need to solve an OPF problem in physical layer. To avoid sharing network parameters, we consider a decentralized OPF, where the network is decomposed into multiple zones \cite{dvorkin2020differentially}. To enable distributed computation, squared voltage angles $v_i$ are duplicated per zone, which is defined as $\vartheta_i\in\mathbb{R}^{|\mathcal{N}|}$ and updated across iterations to satisfy the following consensus constraint:
\begin{equation}
\vartheta_i=\overline \vartheta \colon \mu_i, \forall i\in\mathcal{N}, \label{eq:enforce consensus constraint}
\end{equation}
where $\overline \vartheta\in\mathbb{R}^{|\mathcal{N}|}$ represents the consensus variable and $\mu_i\in\mathbb{R}^{|\mathcal{N}|}$ indicates the dual variable corresponding to the consensus constraint of each node $i$. Note that each node $i$ has its own zone as shown in Fig.~\ref{fig:de_ac-opf}.

The decomposition in (\ref{eq:enforce consensus constraint}) separates the feasibility regions for \textbf{P3} in physical layer per zone; therefore, we denote the constraint set of each zone by $\mathcal{F}_i$. Then, it is necessary to optimize the following partial Lagrangian function to solve OPF per zone. In doing this, we define $\mu=\{\mu_i|i\in\mathcal{N}\}$, $\hat{g}=\{\hat{g}_i|i\in\mathcal{N}\}$ where $\hat{g_i}=g_i+g_{i,\text{loss}}$, $\ell=\{\ell_i|i\in\mathcal{N}\}$, $\vartheta=\{\vartheta_i|i\in\mathcal{N}\}$, which are collection of variables of all nodes. Note that we introduce a new variable $g_{i,\text{loss}}$, which denotes \textit{additional} power generation to compensate the power losses. Since $g_i$ in \textbf{P2}, \textbf{P3}, \textbf{P4} is the power generation in virtual layer, power losses cannot be captured. Hence, in physical layer optimization, we consider the cost incurred for power losses compensation given by $C_{i,\text{loss}}(g_{i,\text{loss}}^P)\triangleq C_i(\hat{g_{i}}^P)-C_i(g_{i}^P)$. Since this term cannot be considered for P2P transaction, we assure that cost incurred from power losses compensation can be covered from a monthly fee as done in \cite{SonnComm}.

\vspace{2mm}
\noindent \textbf{P5: AC-OPF in Physical Layer using ADMM}
\begin{equation}
\underset{\mu}{\max}\underset{\hat{g},l,\vartheta,\overline{\vartheta}}{\min}\mathcal{L}(\hat{g},l,\vartheta,\overline{\vartheta},\mu) =\sum_{i\in\mathcal{N}} \biggr[ C_{i,\text{loss}}(g_{i,\text{loss}})+\mu_i^\top(\overline{\vartheta}-\vartheta_i) \biggr],
\label{eq:lagrangian_function}
\end{equation}
\begin{align}
\text{s.t.} & \quad (\ref{eq:1a}), (\ref{eq:1b}), (\ref{eq:1c}), (\ref{eq:non-linear-constraint}), (\ref{eq:1f}) \in\cap_{i\in\mathcal{N}}\mathcal{F}_i. \nonumber
\end{align}

Distributed OPF computation is given by the following ADMM algorithm \cite{dvorkin2020differentially}. 
\begin{align}
\begin{split}
\vartheta_i^{t+1} =\underset{\hat{g_{i}},l_i,\vartheta_i}{\arg\min}\mathcal{L}_i(\hat{g_{i}},l_i,\vartheta_i,\overline{\vartheta}^t,\mu_{i}^t)  +\frac{\rho}{2}||\overline{\vartheta}^t-\vartheta_i||_2^2,\\
\end{split} \label{eq:squared_voltage_admm}
\end{align}
\begin{equation}
\overline{\vartheta}^{t+1} =\underset{\overline{\vartheta}} {\arg\min}\mathcal{L}(\hat{g},l,\vartheta^{t+1},\overline{\vartheta},\mu^t) +\frac{\rho}{2}\sum_{i\in\mathcal{N}} ||\overline{\vartheta}-\vartheta_i^{t+1}||_2^2, \label{eq:vector_consensus_admm}
\end{equation}
\begin{equation}
\mu_i^{t+1} =\mu_i^t+\rho(\overline{\vartheta}^{t+1}-\vartheta_i^{t+1}). \label{eq:dual-coordination}
\end{equation}
A round index $t$ along with a penalty factor $\rho$ is used to represent the ADMM regularization term. The process is repeated until the termination condition is reached. Condition (\ref{eq:gap_squared_voltage}) is used to evaluate the gap in the values between two iterations of squared voltage angles, and defined as follows:
\begin{equation}
\sum_{i\in\mathcal{N}} ||\overline{\vartheta}^{t+1}-{\vartheta}_i^{t+1}||_2 \leq\epsilon. \label{eq:gap_squared_voltage}
\end{equation}
The primal and dual residual are defined as 
\begin{equation}
r_i^{t+1}=\sum_{i\in\mathcal{N}} ||\overline{\vartheta}^{t+1}-\vartheta_i^{t+1}||_2, s_i^{t+1}=\sum_{i\in\mathcal{N}} ||\vartheta_i^{t+1}-\vartheta_i^t||_2.\label{eq:dual_ac-opf}
\end{equation}
The process of solving \textbf{P5} is summarized in \textbf{Algorithm 2}.

\begin{figure}[t]
\centering
\includegraphics[width=\columnwidth]{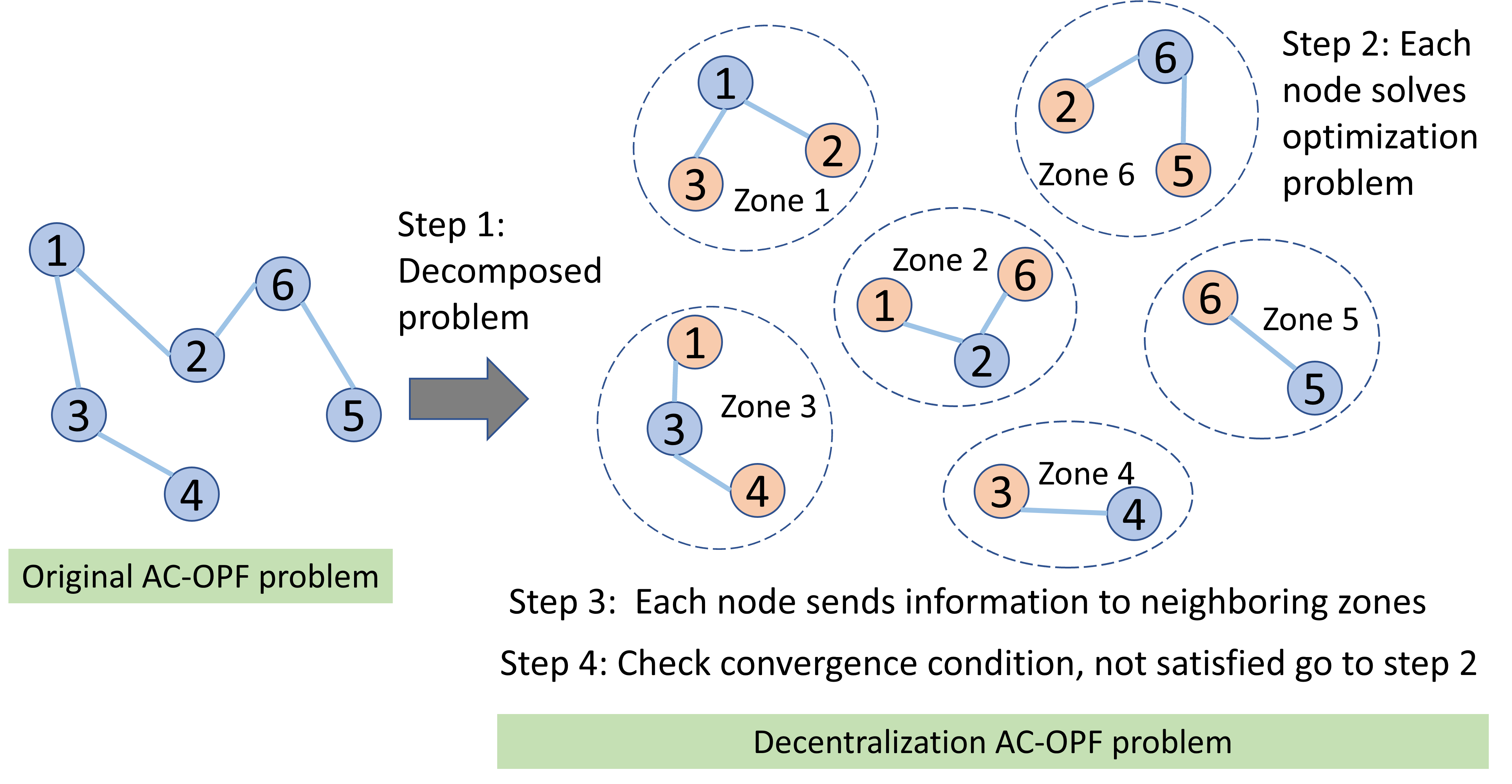}
\caption{Decentralized AC-OPF problem.}
\label{fig:de_ac-opf}
\end{figure}

\begin{algorithm}
\caption{Decentralized P2P realization in physical layer}\label{alg:realization}
\begin{flushleft}
        \textbf{Input:} Given $p_{ij}, p_{0i}, d_{i}, g_{i}$ of \textbf{Algorithm 1};\\
        \textbf{Output:} $g_{i,\text{loss}},l_i,\vartheta_i,\overline{\vartheta}, f_i^P, f_i^Q$;\\
        \textbf{Initialization:} $t=0, g_{i,\text{loss}}=0,l_i=0, \vartheta_i^0=0, \overline{\vartheta}^0=0$, and $\mu_i^{0}=0, \forall i\in \mathcal{N}$;
\end{flushleft}
\begin{algorithmic}[1]
\While{(\ref{eq:gap_squared_voltage}) \textit{not satisfied} }
\State $t \leftarrow t+1$;
\State \parbox[t]{\dimexpr\linewidth-\algorithmicindent\relax}{%
    \setlength{\itemindent}{0em}%
      Each node $i$ solves the local problem in (\ref{eq:squared_voltage_admm}) and sends the result to other nodes in zone;}\strut 
\State \parbox[t]{\dimexpr\linewidth-\algorithmicindent\relax}{%
    \setlength{\itemindent}{0em}%
      Each node $i$ updates consensus following (\ref{eq:vector_consensus_admm}) and dual-coordination signals following (\ref{eq:dual-coordination});}\strut
\State \parbox[t]{\dimexpr\linewidth-\algorithmicindent\relax}{%
    \setlength{\itemindent}{0em}%
      Each node $i$ broadcasts the updated consensus and its residual in (\ref{eq:dual_ac-opf}) to another node; \\
      \#Note that the residual in (\ref{eq:dual_ac-opf}) is used to update ADMM step-size \cite{boyd2011distributed};}\strut
      \vspace{0.1em}
\EndWhile
\end{algorithmic}
\end{algorithm}
\subsection{Congestion Price} \label{subsec:congestion_rule}
Finally, we present the congestion price update, which serves as the \textit{corner}store of our two-stage algorithm. When congestion occurs on line $\ell \in \mathcal{L}$, P2P participants must adjust their P2P matching until they are adapted to fit the power system. The proposed congestion pricing encourages each node to reduce the amount of P2P exchanged energy in the congested line $\ell$. Congestion price is imposed on the overloading of energy when prosumers inject more energy beyond the line capacity. 

At the $n^{th}$ round, let $\kappa_{\ell}^n$ denote the overloading power calculated based on the deviation rate between the active power of P2P matching and the power flow of AC-OPF (active, reactive, and compensation power losses) in line $\ell$ as shown in (\ref{eq:overloading amount}). 
\begin{equation}
\kappa_{\ell}^n=\left(\frac{S_{\ell}^n}{S^{\max}_{\ell}}-1\right) \sum_{(i,j)\in \mathcal{M}_\ell} p_{ij}^n. \label{eq:overloading amount}
\end{equation}
The first component presents the deviation rate between the power flow $S_{\ell}^n$ and the maximum capacity $S^{\max}_{\ell}$ of line $\ell$ at the $n^{th}$ round. The second component is the total amount of active power exchanged over line $\ell$ at the $n^{th}$ round where $\mathcal{M}_\ell$ is a set of pairs $(i,j)$ that use line $\ell$ to transfer energy.

Note that here we focus on active power only but reactive power and losses in lines are affected by active power. Hence, a change in active power results in a change in reactive power and losses in lines as well. Thus, the congestion pricing $\eta_{\ell,ij}^n$ is updated as follows
\begin{equation}
\eta_{\ell,ij}^n=\eta_{\ell,ij}^{n-1}+\gamma\kappa_{\ell}^{n},\label{eq:congestion cost}
\end{equation}
where $\gamma$ denotes the penalty parameter set by the main grid. The proposed two-stage algorithm using congestion pricing is summarized in \textbf{Algorithm 3}.
\begin{algorithm}
\caption{Decentralized two-stage electricity market}\label{alg:two-stage}
\begin{flushleft}
        \parbox[t]{\dimexpr\linewidth-\algorithmicindent\relax}{%
        \setlength{\hangindent}{0pt}%
        \textbf{Input:} Prosumers request P2P market participation;\\
        \textbf{Output:} Market clearing solution;\\
        \textbf{Initialization:} $\overline\lambda$, $\underline\lambda$, $j\in\omega_i$ , $n=1$, and $\eta_{\ell,ij}^1=0$;}\strut
\end{flushleft}
\begin{algorithmic}[1]
\While{true}
\State \parbox[t]{\dimexpr\linewidth-\algorithmicindent\relax}{%
    \setlength{\hangindent}{0pt}%
      Execute \textbf{Algorithm~\ref{alg:cap}};}\strut
\State \parbox[t]{\dimexpr\linewidth-\algorithmicindent\relax}{%
    \setlength{\hangindent}{0pt}%
      Execute \textbf{Algorithm~\ref{alg:realization}};}\strut
\State \parbox[t]{\dimexpr\linewidth-\algorithmicindent\relax}{%
    \setlength{\hangindent}{0pt}%
      Each node examines congestion independently;}\strut
\If{congestion detected} 
    \State \parbox[t]{\dimexpr\linewidth-\algorithmicindent\relax}{%
    \setlength{\hangindent}{0pt}%
      Updates congestion price based on (\ref{eq:overloading amount}) and (\ref{eq:congestion cost});}\strut
\Else{}
    \State Stop two-stage market;
\EndIf 
\State \parbox[t]{\dimexpr\linewidth-\algorithmicindent\relax}{%
    \setlength{\hangindent}{0pt}%
      Nodes notify prosumers to update congestion price;}\strut
\State $n \leftarrow n+1$;
\EndWhile
\end{algorithmic}
\end{algorithm}

\begin{proposition} \label{prop:1} (convergence to an optimal solution) The proposed decentralized two-stage electricity market algorithm using the congestion pricing in (\ref{eq:congestion cost}) converses to an optimal solution of \textbf{P2}. 
\end{proposition}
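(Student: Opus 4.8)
The plan is to read the outer loop of \textbf{Algorithm~\ref{alg:two-stage}} as a dual (subgradient) ascent on the line-capacity constraints of \textbf{P2}, with the two inner ADMM loops playing the role of an (inexact) primal solve of the associated Lagrangian relaxation. First I would establish that \textbf{P2} is a convex program: $U_i$ is concave, $C_i$ and $C_0$ are convex (indeed $C_0$ is taken linear here), the virtual-layer constraints (\ref{eq:Producers energy limit}), (\ref{eq:Consumer energy limit}), (\ref{eq:energy exchanged of prosumer}) are affine, and the physical-layer constraints are convex once the quadratic equality (\ref{eq:1d}) is replaced by the second-order cone (\ref{eq:non-linear-constraint}), whose exactness for radial networks is inherited from \cite{low2014convex}. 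Under the mild assumption that a strictly feasible dispatch exists, Slater's condition then gives strong duality, which is what will let me pass from a dual-optimal congestion price to a primal-optimal dispatch.

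Next I would show that \textbf{P3} is exactly the partial Lagrangian of \textbf{P2} obtained by dualizing the capacity constraints (\ref{eq:capacity limit}), with $\eta_{\ell,ij}$ the associated multipliers. The identity needed here is that the power carried on line $\ell$ is the aggregate of the P2P transactions routed through $\ell$, up to the (second-order) losses already represented in the physical layer, so the violation $S_\ell-S^{\max}_\ell$ can be written through the quantities $p_{ij}$, $(i,j)\in\mathcal{M}_\ell$; this is precisely what makes the penalty $\sum_{\ell}\sum_{j}\eta_{\ell,ij}|p_{ij}|$ in (\ref{eq:SWM}) the correct relaxation. Granting this, for a fixed price vector $\eta^n$ the composite of \textbf{Algorithm~\ref{alg:cap}} and \textbf{Algorithm~\ref{alg:realization}} solves the relaxed problem: \textbf{Algorithm~\ref{alg:cap}} is the prosumer-decomposed ADMM of \cite{baroche2019prosumer} on a convex objective and converges to the virtual-layer minimizer, while \textbf{Algorithm~\ref{alg:realization}} is the consensus ADMM of \cite{dvorkin2020differentially} on the convex SOC-relaxed OPF and converges to the physical-layer minimizer; the step-size adaptation of \cite{boyd2011distributed} does not affect convergence.

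Then I would identify the congestion update (\ref{eq:overloading amount})--(\ref{eq:congestion cost}) as a subgradient step on the dual function $q(\eta)$: at the primal minimizer returned by the inner loops, a subgradient of $q$ in the $\eta_\ell$ direction is proportional to the signed constraint residual, and $\kappa^n_\ell=\bigl(S^n_\ell/S^{\max}_\ell-1\bigr)\sum_{(i,j)\in\mathcal{M}_\ell}p^n_{ij}$ has exactly the sign of $S^n_\ell-S^{\max}_\ell$ whenever line $\ell$ carries net power in the reference direction, i.e.\ it is a positively scaled subgradient. With the penalty parameter $\gamma$ chosen small enough (or diminishing), the projected ascent $\eta^n=\bigl[\eta^{n-1}+\gamma\kappa^n\bigr]_+$ converges to an optimal dual vector $\eta^\star$ by standard subgradient theory, and the accompanying primal iterates converge to a primal--dual optimal pair. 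Finally I would close the argument with complementary slackness: \textbf{Algorithm~\ref{alg:two-stage}} terminates precisely when no line is congested, so the terminal iterate is feasible for \textbf{P2}; together with $\eta^\star\ge 0$ and $\eta^\star_\ell\,(S_\ell-S^{\max}_\ell)=0$ from dual optimality, the KKT conditions of \textbf{P2} hold and the returned dispatch is globally optimal.

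The main obstacle I anticipate is the second step: rigorously relating the virtual transaction variables $p_{ij}$ to the physical line-flow magnitude $S_\ell$ — and hence showing $\kappa^n_\ell$ is a bona fide scaled subgradient rather than merely a heuristic congestion signal — since losses and the coupling introduced by the duplicated voltage variables $\vartheta_i$ make this link only approximate. A secondary difficulty is the two-timescale structure: one must argue that the inner ADMM loops are iterated to sufficient accuracy before each outer price update so that their inexactness does not destroy dual convergence.
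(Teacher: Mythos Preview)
Your approach is precisely the one the paper invokes: the authors' entire proof reads ``The proof is based on the network utility maximization \cite{low1999optimization}, but the detailed proof is omitted due to space limitation.'' The Low--Lapsley NUM framework you reconstruct---dualize the link-capacity constraints, solve the relaxed primal for fixed link prices, and update the prices by a (projected) subgradient step driven by the capacity residual---is exactly what that citation points to, so at the level of strategy you and the paper coincide.

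Because the paper omits all details, your proposal is in fact \emph{more} complete than what appears in print. In particular, the two obstacles you flag are real and are not addressed by the paper either. First, the penalty $\sum_{\ell}\sum_{j\in\omega_i}\eta_{\ell,ij}|p_{ij}|$ in \textbf{P3} is not literally the Lagrangian of the SOC capacity constraint (\ref{eq:capacity limit}); the true multiplier would attach to $S_\ell-S_\ell^{\max}$ (a function of $f_\ell^P,f_\ell^Q$), not to the individual virtual trades $|p_{ij}|$, and the prices are indexed per pair $(\ell,i,j)$ rather than per line. Consequently $\kappa_\ell^n$ in (\ref{eq:overloading amount}) is only a heuristic rescaling of the capacity residual, not an exact subgradient of the dual function, and a rigorous proof would need either an equivalence between the two relaxations or an error bound showing the discrepancy (losses, reactive power, the per-pair disaggregation) vanishes at optimality. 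Second, the fixed step size $\gamma$ in (\ref{eq:congestion cost}) and the absence of a nonnegativity projection on $\eta$ mean that even granting the subgradient identification, one would need the usual diminishing-step or bounded-subgradient hypotheses to conclude convergence rather than mere oscillation. These are genuine gaps in the argument as stated; your proposal correctly isolates them, whereas the paper simply defers to \cite{low1999optimization}.
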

\begin{proof} The proof is based on the network utility maximization \cite{low1999optimization}, but the detailed proof is omitted due to space limitation. 
\end{proof}
\section{Performance Evaluation}\label{sec:performance}
\subsection{Simulation setup}
We numerically evaluate the feasibility of the designed market and the effectiveness of the proposed approach in clearing the market. For detailed examination, we first consider a small network with a 15-bus low voltage radial grid from \cite{mieth2019distribution} as illustrated in Fig.~\ref{fig:network_simulation}, which includes the network parameters obtained from \cite{dvorkin2020distribution}. The coefficients for consumer demand and producers of DERs are from \cite{dvorkin2020distribution}. All prosumers are listed in Table~\ref{tab:con_pars} and Table~\ref{tab:pro_pars}, and the line parameters are presented in Table~\ref{tab:line_pars}. The coefficients for main grid as a producer is $a_0=1$\textdollar$/MWh^2$ and $b_0=25 $\textdollar$/MWh$. The congestion parameter $\gamma=0.5$. For ADMM, the change of penalties $\rho$ for both P2P matching and P2P realization are set from $10^{-4}$ to $10^5$. In addition, $\epsilon_{\text{pri}}=\epsilon_{\text{dual}}=10^{-6}$ in P2P matching while $\epsilon=10^{-4}$ for P2P realization. Simulating is done using Gurobi \cite{Gurobioptimization}.
\begin{figure}
\centering
\includegraphics[width=\columnwidth]{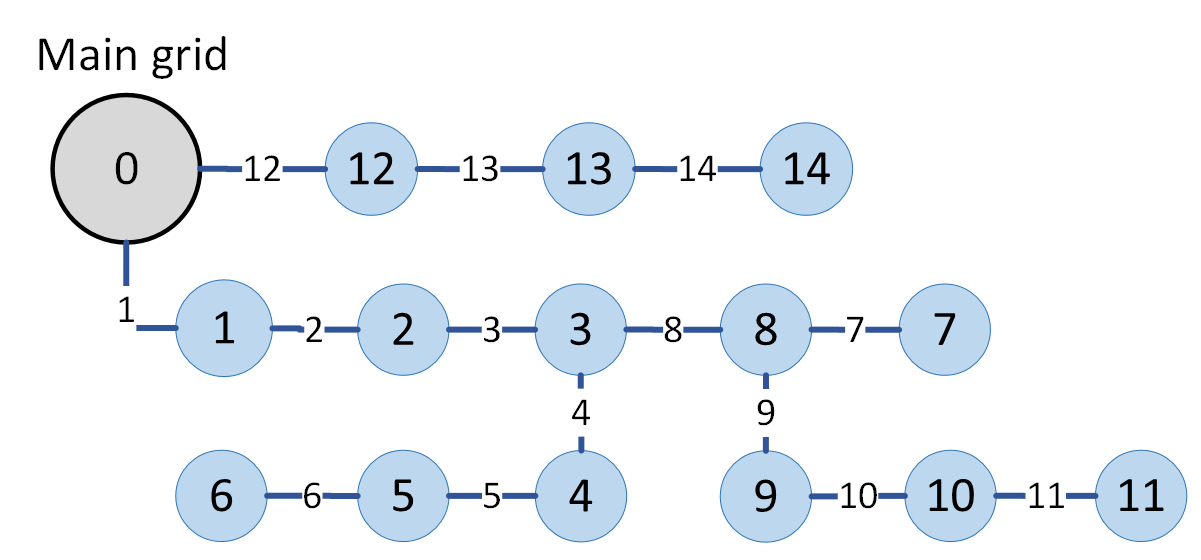}
\caption{Low-voltage 15-bus radial network for evaluation.}
\label{fig:network_simulation}
\end{figure}

\begin{table}[t]
\vspace{0.5em}
\caption{Consumer parameters.}
\label{tab:con_pars}
\resizebox{0.6\columnwidth}{!}{%
\begin{tabular}{|c|c|c|c|c|}
\hline
Node $i$ & $d_i^P$  & $d_i^Q$  & $\alpha_i$  & \ $\beta_i$   \\ \hline
0    & 1000   & 1000   & ---  & ---   \\ \hline
1    & 0.0201 & 0.0068 & 0.99 & 30.12 \\ \hline
2    & 0.0201 & 0.0084 & 0.78 & 14.11 \\ \hline
3    & 0.0201 & 0.0084 & 0.7  & 19.13 \\ \hline
4    & 0.173  & 0.053  & 0.95 & 25.67 \\ \hline
5    & 0.0291 & 0.0073 & 0.86 & 11.25 \\ \hline
6    & 0.219  & 0.12  & 0.76 & 40.28 \\ \hline
7    & 0.0235 & 0.0033 & 0.69 & 12.22 \\ \hline
8    & 0.0235 & 0.0059 & 0.58 & 35.08 \\ \hline
9    & 0.229  & 0.19   & 0.92 & 11.03 \\ \hline
10   & 0.0217 & 0.0065 & 0.53 & 17.06 \\ \hline
11   & 0.0132 & 0.0033 & 0.7  & 25.16 \\ \hline
12   & 0.6219 & 0.2951 & 0.83 & 25.6  \\ \hline
13   & 0.0224 & 0.0083 & 0.69 & 12.02 \\ \hline
14   & 0.0224 & 0.0083 & 0.75 & 17.68 \\ \hline
\end{tabular}}
\centering
\vspace{0.5em}
\end{table}
\begin{table}[t]
\caption{Producer parameters.}
\label{tab:pro_pars}
\resizebox{0.7\columnwidth}{!}{%
\begin{tabular}{|c|c|c|c|c|c|c|}
\hline
Node $i$ & $g_i^P$  & $g_i^Q$  & $v_i^{\min}$ & $v_i^{\max}$   & $a_i$    & $b_i$  \\ \hline
0    & 1000 & 1000 & 1.21 & 0.81 & ---  & ---   \\ \hline
1    & 0.4  & 0.2  & 1.21 & 0.81 & 0.57 & 9.34  \\ \hline
2    & 0.4  & 0.2  & 1.21 & 0.81 & 0.84 & 8.49  \\ \hline
3    & 0.4  & 0.2  & 1.21 & 0.81 & 0.96 & 14.3  \\ \hline
4    & 0.4  & 0.2  & 1.21 & 0.81 & 0.55 & 17.64 \\ \hline
5    & 0.4  & 0.2  & 1.21 & 0.81 & 0.89 & 15.46 \\ \hline
6    & 0.4  & 0.2  & 1.21 & 0.81 & 0.87 & 17.48 \\ \hline
7    & 0.4  & 0.2  & 1.21 & 0.81 & 0.62 & 11.38 \\ \hline
8    & 0.4  & 0.2  & 1.21 & 0.81 & 0.76 & 9.55  \\ \hline
9    & 0.4  & 0.2  & 1.21 & 0.81 & 0.99 & 19.17 \\ \hline
10   & 0.4  & 0.2  & 1.21 & 0.81 & 0.79 & 18.21 \\ \hline
11   & 0.4  & 0.2  & 1.21 & 0.81 & 0.68 & 19.89 \\ \hline
12   & 0.4  & 0.2  & 1.21 & 0.81 & 0.92 & 7.34  \\ \hline
13   & 0.4  & 0.2  & 1.21 & 0.81 & 0.58 & 14.47 \\ \hline
14   & 0.4  & 0.2  & 1.21 & 0.81 & 0.8  & 18.47 \\ \hline
\end{tabular}}
\centering
\end{table}
\begin{table}[t]
\vspace{0.5em}
\caption{Line parameters.}
\label{tab:line_pars}
\centering
\resizebox{0.7\columnwidth}{!}{%
\begin{tabular}{|c|c|c|c|c|c|}
\hline
Line $i(=\ell)$ & From & To & $r_i$    & $x_i$    & $S_\ell^{\max}$    \\ \hline
1   & 0    & 1  & 0.001  & 0.12   & 2     \\ \hline
2   & 1    & 2  & 0.0883 & 0.1262 & 0.512 \\ \hline
3   & 2    & 3  & 0.1384 & 0.1978 & 0.512 \\ \hline
4   & 3    & 4  & 0.0191 & 0.0273 & 0.256 \\ \hline
5   & 4    & 5  & 0.0175 & 0.0251 & 0.256 \\ \hline
6   & 5    & 6  & 0.0482 & 0.0689 & 0.256 \\ \hline
7   & 8    & 7  & 0.0523 & 0.0747 & 0.256 \\ \hline
8   & 3    & 8  & 0.0407 & 0.0582 & 0.256 \\ \hline
9   & 8    & 9  & 0.01   & 0.0143 & 0.256 \\ \hline
10  & 9    & 10 & 0.0241 & 0.0345 & 0.256 \\ \hline
11  & 10   & 11 & 0.0103 & 0.0148 & 0.256 \\ \hline
12  & 0    & 12 & 0.001  & 0.12   & 1     \\ \hline
13  & 12   & 13 & 0.1559 & 0.1119 & 0.204 \\ \hline
14  & 13   & 14 & 0.0953 & 0.0684 & 0.204 \\ \hline
\end{tabular}}
\end{table}

\subsection{Decentralized OPF and congestion management}
Now we present the simulation results and illustrate voltage variation and losses in lines to deal with congestion.
\subsubsection{Decentralized OPF} The goal of AC-OPF is to minimize the cost of power losses owing to the transfer of energy in lines. Fig.~\ref{fig:compare_voltage} illustrates the voltages at all the nodes. The voltage magnitudes range from 0.96 pu to 1.03 pu at each node. The performance of the proposed decentralized method is same to that of the centralized one, and the total losses are all 0.0121 $MWh$. Consequently, the decentralized AC-OPF achieves the optimal solution while minimizing information exchange in a distribution network, reducing the risk of confidential information leaks.
\begin{figure}
\centering
\includegraphics[width=\columnwidth]{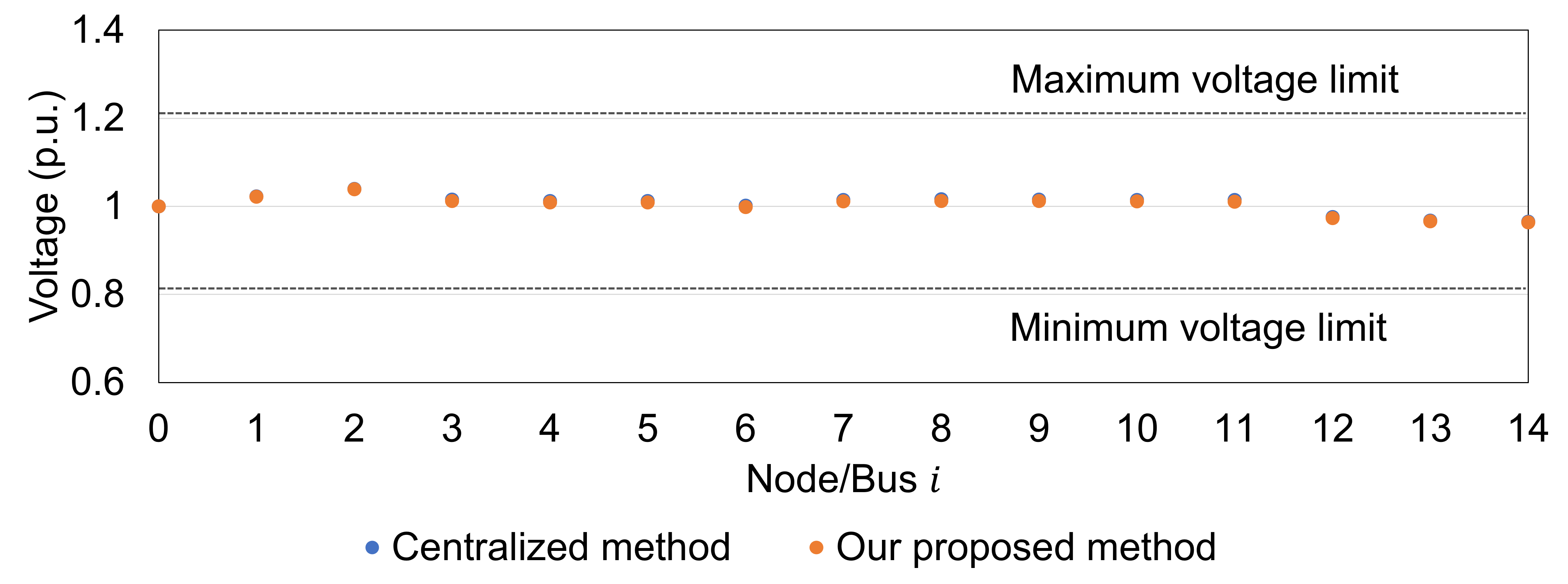}
\caption{Voltage magnitude at each node in 15-bus.}
\label{fig:compare_voltage}
\end{figure}
\begin{figure}
\centering
\includegraphics[width=\columnwidth]{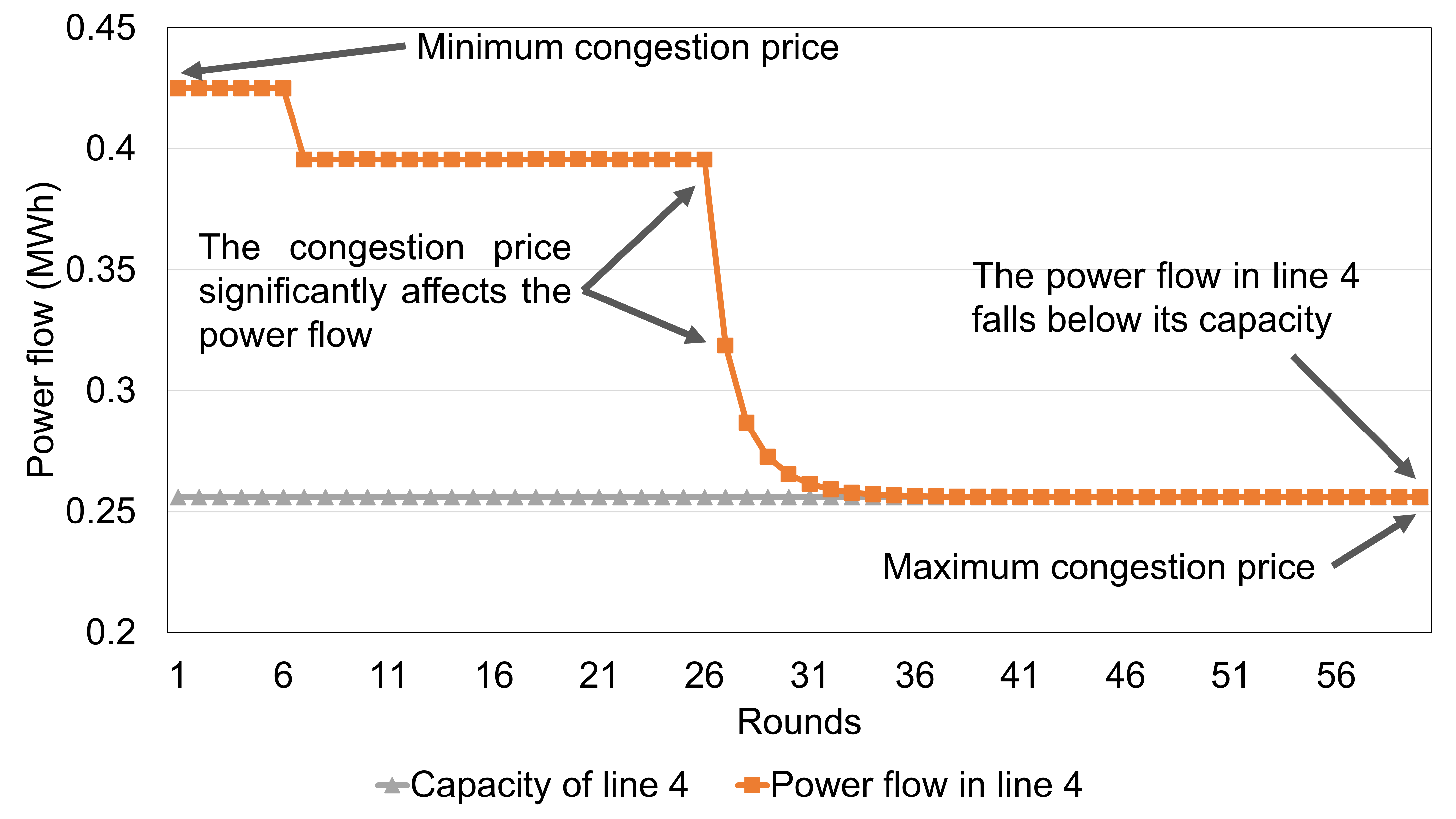}
\caption{Power flow of Line 4 in all rounds in 15-bus.}
\label{fig:Power flow of line 4}
\end{figure}
\subsubsection{Congestion Pricing} After solving AC-OPF, overloading of the network is determined, and congestion pricing is updated at each node. In this experiment, congestion occurs on Line 4. The change in the power flow of Line 4 during the operation of the P2P solution and AC-OPF is depicted in Fig.~\ref{fig:Power flow of line 4}. The first round is from the start of the algorithm where the first detection of congestion at nodes. The last round indicates the point at which no congestion occurs. Our method converges in 60 rounds, and the electricity market reaches equilibrium after 94.8 seconds.
\begin{figure}
\vspace{0.5em}
\centering
\includegraphics[width=\columnwidth]{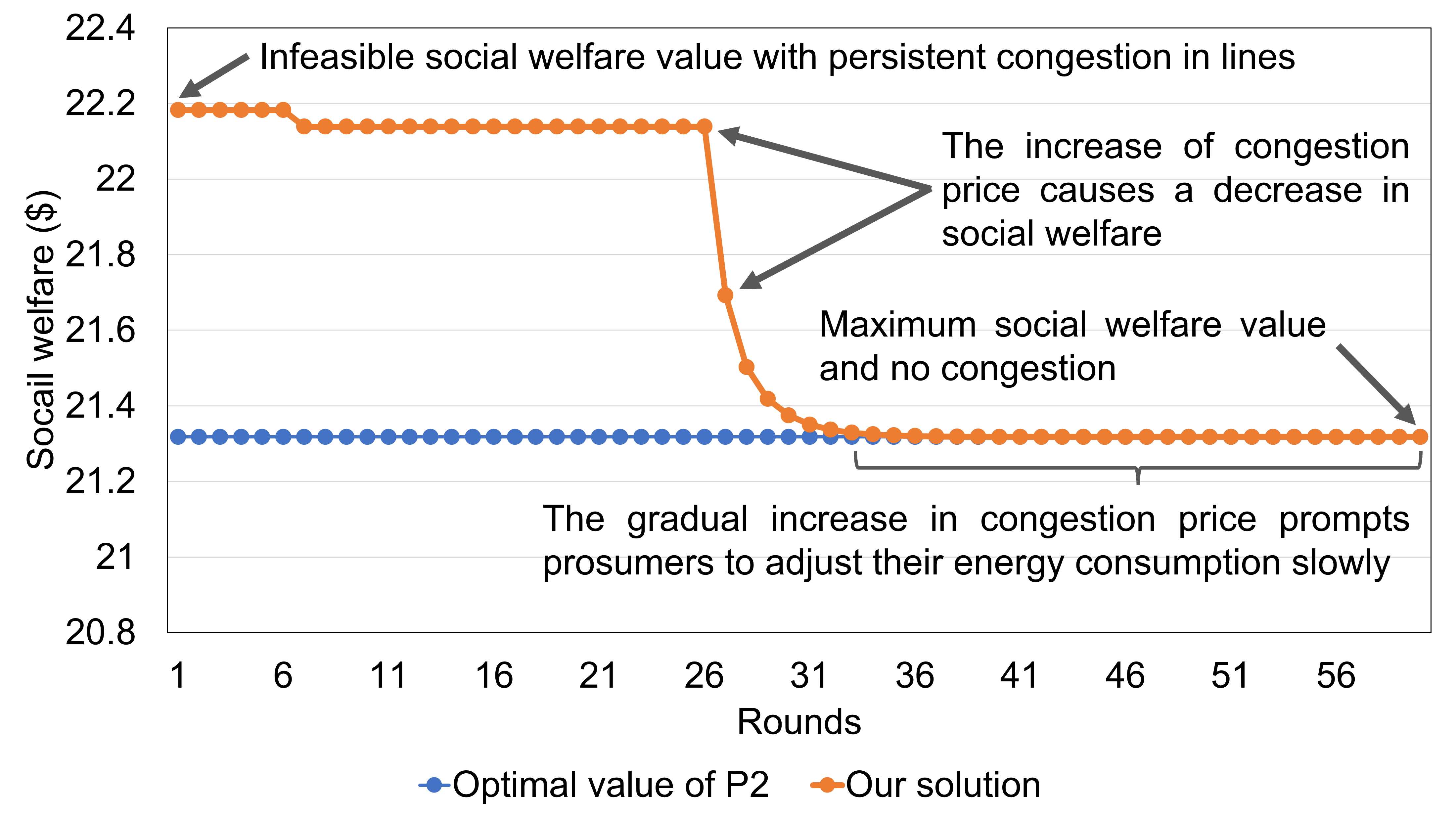}
 \caption{Social welfare in case 15-bus following rounds.}
\label{fig:social_welfare_rounds}
\end{figure}
\begin{figure}
\centering
\includegraphics[width=\columnwidth]{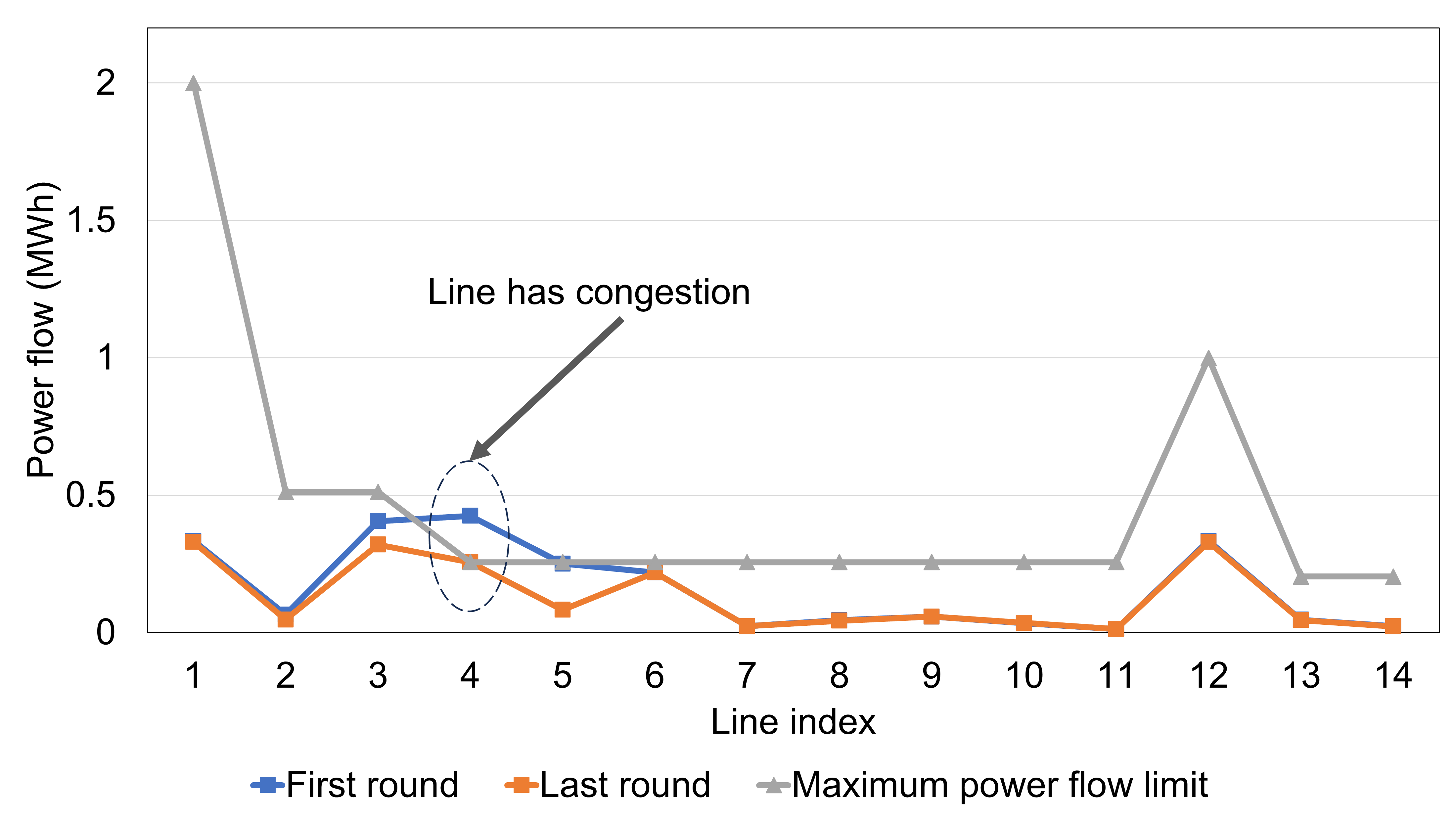}
 \caption{Power flow of all lines in 15-bus.}
\label{fig:active_all_lines}
\end{figure}
\begin{figure}
\centering
\includegraphics[width=\columnwidth]{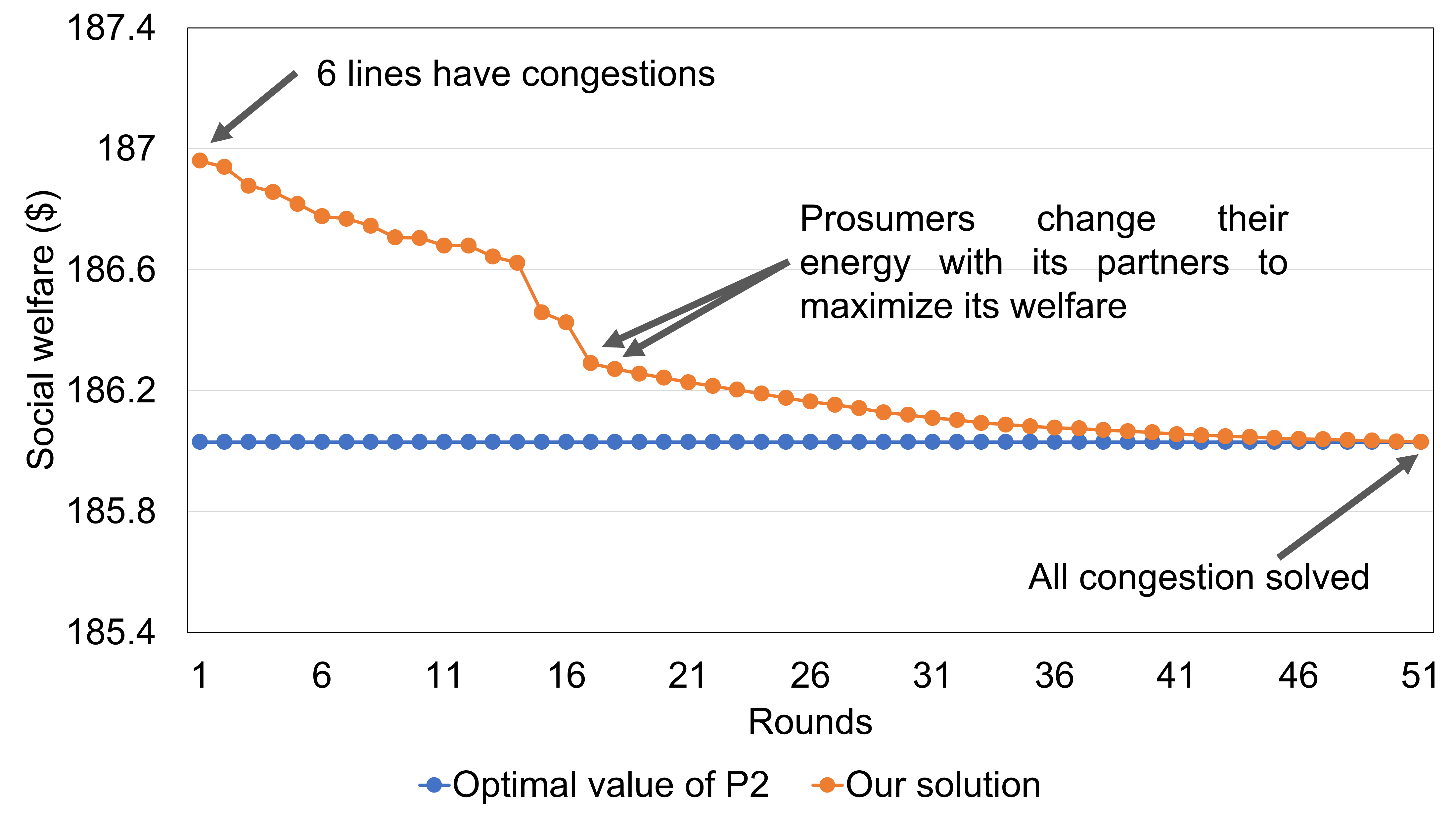}
 \caption{Social welfare in case 141-bus following rounds.}
\label{fig:social_welfare_141_rounds}
\end{figure}

As mentioned in (\ref{eq:congestion cost}), congestion price is determined from the power flow in the congested line, and the previous congestion price. As can be seen in Fig.~\ref{fig:Power flow of line 4}, the power flow falls below the line capacity only after 60 rounds of iterations. It results in prosumers reducing their amounts in matched pairs and encouraging them to trade with another prosumers instead. One may wonder why the power flow curve appears to be the same, for example, from 7 to 26 in Fig.~\ref{fig:Power flow of line 4}. The reason is as follows. The benefits derived from trading on another lines are lower than those derived from trading on the congested line $\ell$. Therefore, prosumers maintains a preference for trades on $\ell$, even if it entails incurring additional costs. Thus, the power flow curve may remain the same even if a penalty is applied each round. Besides, the increased congestion price leads to change energy $p_{ij}$, resulting in the change in social welfare, and converges to an optimal social welfare of \textbf{P2}, as shown in Fig.~\ref{fig:social_welfare_rounds}. The final result of congestion for all lines is shown in Fig.~\ref{fig:active_all_lines}.
\begin{table}[t]
\vspace{0.5em}
\caption{Comparison of average price between first and last rounds.}
\label{tab:compre_average_price}
\centering
\resizebox{0.8\columnwidth}{!}{%
\begin{tabular}{|c|c|c|}
\hline
                                                 & First round & Last round \\ \hline
\makecell{Average P2P price\\of prosumers using \\congested Line 4} & 9.71 (\textdollar$/MWh$)             & 12.66 (\textdollar$/MWh$)         \\ \hline
\makecell{Average P2P price\\of all prosumers}                 & 9.71 (\textdollar$/MWh$)             & 10.2 (\textdollar$/MWh$)         \\ \hline
\end{tabular}}
\vspace{0.5em}
\end{table}
\begin{table}
\caption{Matched pairs in first and last rounds in Line 4.}
\label{tab:matched_pairs}
\centering
\resizebox{0.8\columnwidth}{!}{%
\begin{tabular}{|c|c|c|}
\hline
                                                      & First round & Last round \\ \hline
\multirow{3}{*}{\makecell{Matched pairs using \\congested Line 4}} & (1,4); (1,5);  (1,6);   & (1,4); (1,6);   \\ 
                                                      &  (2,4); (2,5); (2,6);    & (2,4); (2,6);    \\  
                                                      &  (8,4); (8,5); (8,6);   & (12,4); (12,6);    \\  
                                                      &  (12,4); (12,5); (12,6);   &  \\  \hline
\end{tabular}}
\end{table}

Table~\ref{tab:compre_average_price} indicates that using the proposed technique, the average market clearing price of all consumers converges to 9.71\textdollar/$MWh$ and congestion increases the average price significantly. Specifically, at the first round, when congestion is not taken into account, the average price on Line 4 is 9.71\textdollar/$MWh$. At the last round, however, the average clearing price of pairs on congested Line 4 is increased to 12.66\textdollar/$MWh$. It lead to the total clearing price of the P2P market changes to 10.2\textdollar/$MWh$. Furthermore, according to the obtained results, the congestion not only affects the average price but also affects the matching pairs that use the congested line. The results are presented in Table~\ref{tab:matched_pairs}, where the number of matched pairs has decreased from 12 to 6. Based on the results, the proposed scheme could potentially relieve the grid congestion while improving P2P energy management between prosumers. The detail of clearing prices and quantities can be seen in Table~\ref{tab:trade_prices} and Table~\ref{tab:trade_quantities}, respectively. 

To check the performance in a larger system, P2P energy trading simulations are performed with 282 participants on a modified 141-bus system \cite{mieth2019distribution}, and operation time is 1239 seconds for 51 rounds. This is shown that operation time for each round $n$ increases due to the number of participants. The results in Fig.~\ref{fig:social_welfare_141_rounds} reveal that when prosumers have more options to choose from, they tend to trade on lines with no congestion. As a result, the 141-bus system requires fewer rounds of iterations compared to the 15-bus system.

\begin{table}
\caption{Clearing prices $\lambda_{ij}$ between prosumer $i$ and its partners $j$.}
\label{tab:trade_prices}
\centering
\resizebox{1\columnwidth}{!}{%
\begin{tabular}{|c|c|c|c|c|c|c|c|c|c|c|c|c|c|c|}
\hline
\textbf{$\lambda_{ij}$} & \textbf{1} & \textbf{2} & \textbf{3} & \textbf{4} & \textbf{5} & \textbf{6} & \textbf{7} & \textbf{8} & \textbf{9} & \textbf{10} & \textbf{11} & \textbf{12} & \textbf{13} & \textbf{14} \\ \hline
\textbf{1}         & 9.62       & 9.62       & 9.62       & 12.66      & 0          & 12.66      & 9.62       & 9.62       & 9.62       & 9.62        & 9.62        & 9.62        & 9.62        & 9.62        \\ \hline
\textbf{2}         & 9.62       & 9.62       & 9.62       & 12.66      & 0          & 12.66      & 9.62       & 9.62       & 9.62       & 9.62        & 9.62        & 9.62        & 9.62        & 9.62        \\ \hline
\textbf{3}         & 0          & 0          & 0          & 0          & 0          & 0          & 0          & 0          & 0          & 0           & 0           & 0           & 0           & 0           \\ \hline
\textbf{4}         & 0          & 0          & 0          & 0          & 0          & 0          & 0          & 0          & 0          & 0           & 0           & 0           & 0           & 0           \\ \hline
\textbf{5}         & 0          & 0          & 0          & 15.71      & 0          & 15.71      & 0          & 0          & 0          & 0           & 0           & 0           & 0           & 0           \\ \hline
\textbf{6}         & 0          & 0          & 0          & 0          & 0          & 0          & 0          & 0          & 0          & 0           & 0           & 0           & 0           & 0           \\ \hline
\textbf{7}         & 0          & 0          & 0          & 0          & 0          & 0          & 0          & 0          & 0          & 0           & 0           & 0           & 0           & 0           \\ \hline
\textbf{8}         & 9.62       & 9.62       & 9.62       & 0      & 0          & 0      & 9.62       & 9.62       & 9.62       & 9.62        & 9.62        & 9.62        & 9.62        & 9.62        \\ \hline
\textbf{9}         & 0          & 0          & 0          & 0          & 0          & 0          & 0          & 0          & 0          & 0           & 0           & 0           & 0           & 0           \\ \hline
\textbf{10}        & 0          & 0          & 0          & 0          & 0          & 0          & 0          & 0          & 0          & 0           & 0           & 0           & 0           & 0           \\ \hline
\textbf{11}        & 0          & 0          & 0          & 0          & 0          & 0          & 0          & 0          & 0          & 0           & 0           & 0           & 0           & 0           \\ \hline
\textbf{12}        & 9.62       & 9.62       & 9.62       & 12.66      & 0          & 12.66      & 9.62       & 9.62       & 9.62       & 9.62        & 9.62        & 9.62        & 9.62        & 9.62        \\ \hline
\textbf{13}        & 0          & 0          & 0          & 0          & 0          & 0          & 0          & 0          & 0          & 0           & 0           & 0           & 0           & 0           \\ \hline
\textbf{14}        & 0          & 0          & 0          & 0          & 0          & 0          & 0          & 0          & 0          & 0           & 0           & 0           & 0           & 0           \\ \hline
\end{tabular}}
\end{table}

\begin{table}
\vspace{0.5em}
\caption{Clearing quantities $p_{ij}$ between prosumer $i$ and its partners $j$.}
\label{tab:trade_quantities}
\centering
\resizebox{1\columnwidth}{!}{%
\begin{tabular}{|c|c|c|c|c|c|c|c|c|c|c|c|c|c|c|}
\hline
\textbf{$p_{ij}$} & \textbf{1} & \textbf{2} & \textbf{3} & \textbf{4} & \textbf{5} & \textbf{6} & \textbf{7} & \textbf{8} & \textbf{9} & \textbf{10} & \textbf{11} & \textbf{12} & \textbf{13} & \textbf{14} \\ \hline
\textbf{1}         & 0.007      & 0.007      & 0.007      & 0.061      & 0          & 0.031      & 0.014      & 0.014      & 0.013      & 0.010       & 0.003       & 0.050       & 0.011       & 0.012       \\ \hline
\textbf{2}         & 0.005      & 0.005      & 0.005      & 0.012      & 0          & 0.113      & 0.002      & 0.002      & 0.002      & 0.004       & 0.004       & 0.241       & 0.003       & 0.003       \\ \hline
\textbf{3}         & 0          & 0          & 0          & 0          & 0          & 0          & 0          & 0          & 0          & 0           & 0           & 0           & 0           & 0           \\ \hline
\textbf{4}         & 0          & 0          & 0          & 0          & 0          & 0          & 0          & 0          & 0          & 0           & 0           & 0           & 0           & 0           \\ \hline
\textbf{5}         & 0          & 0          & 0          & 0.088      & 0          & 0.05      & 0          & 0          & 0          & 0           & 0           & 0           & 0           & 0           \\ \hline
\textbf{6}         & 0          & 0          & 0          & 0          & 0          & 0          & 0          & 0          & 0          & 0           & 0           & 0           & 0           & 0           \\ \hline
\textbf{7}         & 0          & 0          & 0          & 0          & 0          & 0          & 0          & 0          & 0          & 0           & 0           & 0           & 0           & 0           \\ \hline
\textbf{8}         & 0.003      & 0.003      & 0.003      & 0      & 0          & 0      & 0.006      & 0.006      & 0.006      & 0.004       & 0.002       & 0.001       & 0.005       & 0.005       \\ \hline
\textbf{9}         & 0          & 0          & 0          & 0          & 0          & 0          & 0          & 0          & 0          & 0           & 0           & 0           & 0           & 0           \\ \hline
\textbf{10}        & 0          & 0          & 0          & 0          & 0          & 0          & 0          & 0          & 0          & 0           & 0           & 0           & 0           & 0           \\ \hline
\textbf{11}        & 0          & 0          & 0          & 0          & 0          & 0          & 0          & 0          & 0          & 0           & 0           & 0           & 0           & 0           \\ \hline
\textbf{12}        & 0.005      & 0.005      & 0.005      & 0.012      & 0          & 0.024      & 0.002      & 0.002      & 0.002      & 0.004       & 0.004       & 0.330       & 0.003       & 0.003       \\ \hline
\textbf{13}        & 0          & 0          & 0          & 0          & 0          & 0          & 0          & 0          & 0          & 0           & 0           & 0           & 0           & 0           \\ \hline
\textbf{14}        & 0          & 0          & 0          & 0          & 0          & 0          & 0          & 0          & 0          & 0           & 0           & 0           & 0           & 0           \\ \hline
\end{tabular}}
\end{table}

\subsection{P2P matching analysis and efficiency comparison}
Next we analyze social welfare of prosumers and the main grid with and without the two proposed dynamic pricing schemes. Since this is about P2P matching analysis, we consider the virtual layer only here, and the following P2P matching is divided into two cases.

\textbf{Case 1. Prosumers do not trade with main grid}:
When producers have sufficient energy to cover all consumers, there happens no grid trading; for example, this is the case during off-peak hours. As can be seen in Table~\ref{tab:Compare_with_central}, the proposed method effectively achieves the same social welfare as the centralized method without requiring all user information, as it shares user information only among matching prosumers.
\begin{table}[t]
\caption{Compared to the previous method in \textbf{Case 1}.}
\label{tab:Compare_with_central}
\centering
\resizebox{0.8\columnwidth}{!}{%
\begin{tabular}{|c|c|c|}
\hline
                   & \makecell{Social \\welfare (\textdollar)} & \makecell{Amount of exchanged\\ energy ($MWh$)}\\ \hline
Centralized method    & 22.182                & 1.2529                       \\ \hline
Proposed method & 22.182                & 1.2529                       \\ \hline
\end{tabular}}
\end{table}
\begin{table}[t]
\vspace{0.5em}
\caption{Compared to the previous method in \textbf{Case 2}.}
\label{tab:Compare_grid_trade}
\centering
\resizebox{\columnwidth}{!}{%
\begin{tabular}{|c|c|c|c|}
\hline
                        & UPS & DPS & {\cite{umer2021novel}} \\ \hline
\makecell{Generation cost of main grid (\textdollar)}       & 5.45        & 9.464       & 12.66     \\ \hline
\makecell{Exchanged energy with main grid ($MWh$)}      & 0.2162        & 0.3771        & 0.506    \\ \hline
\makecell{Welfare of consumers (\textdollar)}      & 11.234        & 11.296        & 11.346    \\ \hline
\makecell{Generation cost of producers (\textdollar)}      & 4.472        & 4.472        & 4.472    \\ \hline
\makecell{Exchanged energy among prosumers ($MWh$)}      & 0.3        & 0.3        & 0.3    \\ \hline
\makecell{Social welfare in centralized implementation (\textdollar)}      & 6.762        & 6.825      & 6.874    \\ \hline
\makecell{Social welfare in proposed method (\textdollar)}      & 6.762        & 6.825        & --    \\ \hline
\end{tabular}}
\end{table}

\textbf{Case 2. Prosumers trade with main grid}:
When producers do not have sufficient surplus energy to satisfy the consumer demand; for example, during on-peak hours, consumers have to purchase the remaining energy from the main grid. In doing this, we modify the generation parameters at nodes 5, 13, to 0.1$MWh$ and 0.2$MWh$, respectively, and the remaining nodes have zero $g_i^P$ and $g_i^Q$ in Table~\ref{tab:pro_pars}.

\subsubsection{Comparison between proposed community grid and traditional community}
As described in Section~\ref{subsec:PG_model}, the dynamic price is updated by prosumer demand. It offers the main grid flexibility to deal with the increase/decrease in demand. Table~\ref{tab:Compare_grid_trade} illustrates the different results in the proposed community grid and also compare with traditional community\cite{umer2021novel}. Note that the buying price from grid in \cite{umer2021novel} assumed is $25$\textdollar$/MWh$ and equal to minimum price $b_0$ of this study. Compared with \cite{umer2021novel}, the actual generated energy cost of main grid is decreased by 56.9\% and 25.2\% in UPS and DPS, respectively. Furthermore, our proposed method results in a reduction in grid trading in total, 57.3\% and 25.5\% in UPS and DPS, respectively. Although prosumers can adjust their consumption to maximize their benefits according to the main grid's price, they still experience a small negative impact on their welfare. The reason is that our method's welfare curve increases at a lower rate due to lower energy consumption, compared to the higher rate observed when energy consumption is higher \cite{umer2021novel}. Thus, \cite{umer2021novel} has social welfare slightly better than our proposed method, as can be seen in Table~\ref{tab:Compare_grid_trade}.
\begin{table}
\caption{P2P matching of proposed pricing schemes in \textbf{Case 2}.}
\label{tab:Compare_proposed_schemes}
\centering
\begin{adjustbox}{width=\columnwidth}{}
\begin{tabular}{|c|ccc|ccc|}
\hline
             & \multicolumn{3}{c|}{Unique price scheme (UPS)}                                                  & \multicolumn{3}{c|}{Differential price scheme (DPS)}                                            \\ \hline
\makecell{Prosumer\\No.} & \multicolumn{1}{c|}{$p_{0i}$} & \multicolumn{1}{c|}{$\sum_{j\in\omega_i} p_{ij}$} & \makecell{$W_i(d_i,g_i,p_{i0})$} & \multicolumn{1}{c|}{$p_{0i}$} & \multicolumn{1}{c|}{$\sum_{j\in\omega_i} p_{ij}$} & \makecell{$W_i(d_i,g_i,p_{i0})$} \\ \hline
1            & \multicolumn{1}{c|}{0.0201}    & \multicolumn{1}{c|}{0}             & 0.098                     & \multicolumn{1}{c|}{0.0201}   & \multicolumn{1}{c|}{0}             & 0.102                     \\ \hline
2            & \multicolumn{1}{c|}{0}         & \multicolumn{1}{c|}{0}             & 0                         & \multicolumn{1}{c|}{0}         & \multicolumn{1}{c|}{0}             & 0                         \\ \hline
3            & \multicolumn{1}{c|}{0}         & \multicolumn{1}{c|}{0}             & 0                         & \multicolumn{1}{c|}{0}         & \multicolumn{1}{c|}{0}             & 0                         \\ \hline
4            & \multicolumn{1}{c|}{0.0624}    & \multicolumn{1}{c|}{-0.0738}       & 1.906                     & \multicolumn{1}{c|}{0.1094}   & \multicolumn{1}{c|}{-0.0636}       & 1.666                     \\ \hline
5            & \multicolumn{1}{c|}{0}         & \multicolumn{1}{c|}{0.1000}           & -1.555                    & \multicolumn{1}{c|}{0}    & \multicolumn{1}{c|}{0.1000}           & -1.555                    \\ \hline
6            & \multicolumn{1}{c|}{0.0542}    & \multicolumn{1}{c|}{-0.1648}       & 7.418                     & \multicolumn{1}{c|}{0.0996}   & \multicolumn{1}{c|}{-0.1194}       & 6.284                     \\ \hline
7            & \multicolumn{1}{c|}{0}         & \multicolumn{1}{c|}{0}             & 0                         & \multicolumn{1}{c|}{0}         & \multicolumn{1}{c|}{0}             & 0                         \\ \hline
8            & \multicolumn{1}{c|}{0.0235}    & \multicolumn{1}{c|}{0}             & 0.232                     & \multicolumn{1}{c|}{0.0235}   & \multicolumn{1}{c|}{0}             & 0.2361                    \\ \hline
9            & \multicolumn{1}{c|}{0}         & \multicolumn{1}{c|}{0}             & 0                         & \multicolumn{1}{c|}{0}         & \multicolumn{1}{c|}{0}             & 0                         \\ \hline
10           & \multicolumn{1}{c|}{0}         & \multicolumn{1}{c|}{0}             & 0                         & \multicolumn{1}{c|}{0}         & \multicolumn{1}{c|}{0}             & 0                         \\ \hline
11           & \multicolumn{1}{c|}{0}         & \multicolumn{1}{c|}{0}             & 0                         & \multicolumn{1}{c|}{0.0132}   & \multicolumn{1}{c|}{0}             & 0.002                     \\ \hline
12           & \multicolumn{1}{c|}{0.0545}    & \multicolumn{1}{c|}{-0.0614}       & 1.581                     & \multicolumn{1}{c|}{0.1113}   & \multicolumn{1}{c|}{-0.1170}       & 3.006                     \\ \hline
13           & \multicolumn{1}{c|}{0}         & \multicolumn{1}{c|}{0.2000}           & -2.917                    & \multicolumn{1}{c|}{0}         & \multicolumn{1}{c|}{0.2000}           & -2.917                    \\ \hline
14           & \multicolumn{1}{c|}{0}         & \multicolumn{1}{c|}{0}             & 0                         & \multicolumn{1}{c|}{0}         & \multicolumn{1}{c|}{0}             & 0                         \\ \hline
\textbf{SUM} & \multicolumn{1}{c|}{0.2162}    & \multicolumn{1}{c|}{0}             & 6.762                     & \multicolumn{1}{c|}{0.3771}   & \multicolumn{1}{c|}{0}             & 6.825                     \\ \hline
\end{tabular}
\end{adjustbox}
\end{table}
\begin{figure}
\centering
\includegraphics[width=\columnwidth]{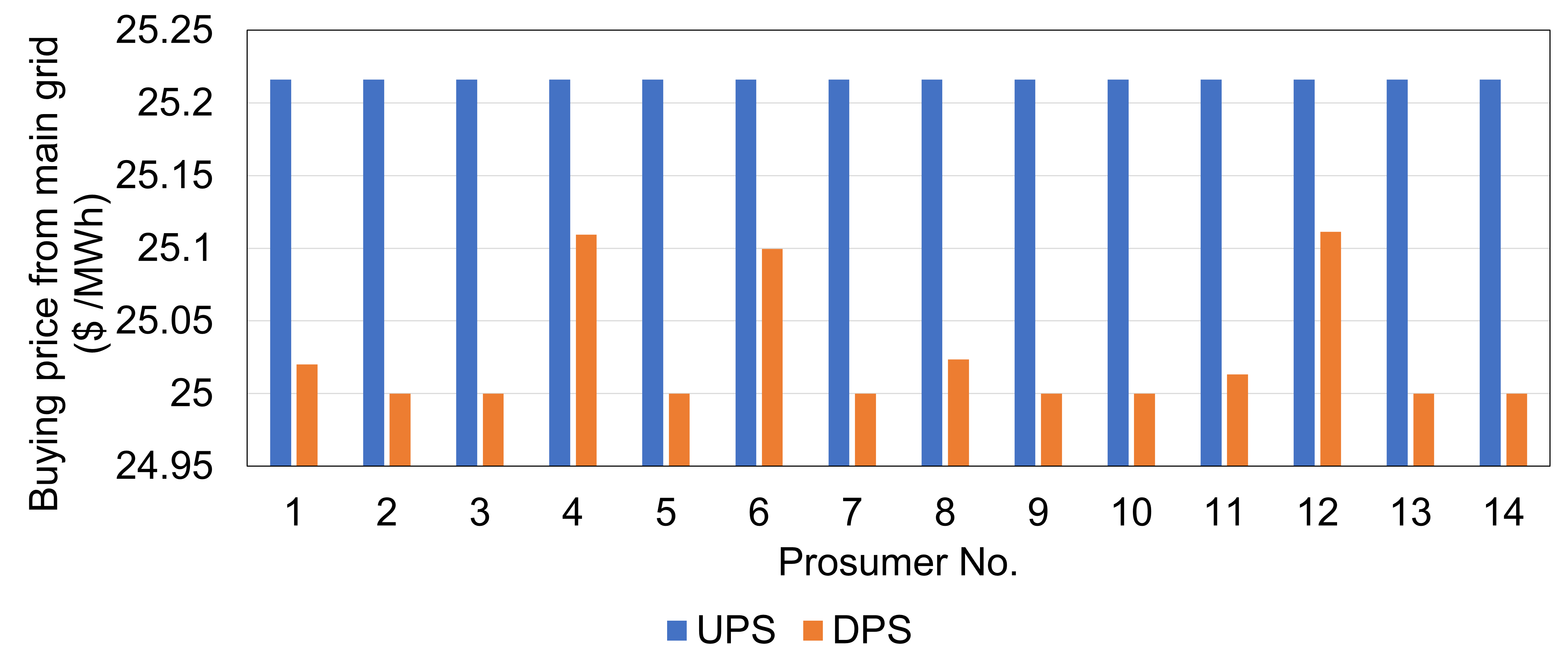}
\caption{Buying price from the main grid: $\overline\lambda$ (UPS), $\overline\lambda_i$ (DPS).}
\label{fig:compare_2_dynamics_price}
\end{figure}

In summary, the use of a quadratic cost function can reduce the generation cost of the main grid by 7.21\textdollar\ in UPS and 3.196\textdollar\ in DPS, compared to \cite{umer2021novel}. However, this lead to losses for consumers, with a welfare drop of 0.112\textdollar\ in UPS and 0.05\textdollar\ in DPS. This indicates a trade-off between market participants, prioritizing the grid's gains over consumer welfare. Consequently, the grid operator must thoroughly evaluate and balance the benefits and losses experienced by consumers while determining dynamic cost parameters $a_0$ and minimum price $b_0$ in the dynamic price model. This evaluation becomes crucial for incentivizing consumers to transition from the traditional market to the proposed market.

\subsubsection{Fairness of market participants in community grid}
In Section~\ref{subsec:PG_model}, we discussed how UPS and other schemes \cite{luo2021multiple,mehdinejad2022peer} are not fair to all prosumers. This subsection analyzes it with results. Each prosumer in the market is rationally deciding how much energy they wish to buy based on their satisfaction or preferences. Thus, the Jain's fairness index (JFI) in \cite{jain1984quantitative} is used to determine the fairness of UPS and DPS. (\ref{eq:fairness_index}) quantifies the spread of welfare among prosumers. JFI is higher with a fairer scheme than with a lower one.
\begin{equation}
JFI = \frac{{(\sum_{i\in\mathcal{N}} W_i)}^2}{|\mathcal{N}|\sum_{i\in\mathcal{N}} {(W_{i})}^2}. \label{eq:fairness_index}
\end{equation}
When calculating $JFI$ from Table~\ref{tab:Compare_proposed_schemes}, we discover that DPS is 7\% fairer than UPS. This is because prosumer 11 cannot purchase any energy in UPS, resulting in zero utilities and welfare. In DPS, this prosumer can purchase 0.0132$MWh$ of energy, yielding  0.332\textdollar\ in utilities and 0.002\textdollar\ in welfare. This comparison shows that prosumers in UPS are limited in their ability to purchase energy and experience losses due to the actions of other prosumers. In contrast, DPS does not have these limitations, which can encourage more prosumers to participate in the market. Fig.~\ref{fig:compare_2_dynamics_price} shows each prosumer's buying price from both the main grid for both UPS and DPS.
\section{Conclusion}\label{sec:conclusion}
In this paper, we introduce a decentralized market framework for the P2P electricity market in both layers, leveraging the concept of distributed utility maximization to demonstrate optimality and the optimal solution. The virtual layer introduces a decentralized community grid, directly connecting prosumers to the main grid, thus enabling strategic price adjustments that minimize generation costs. By integrating congestion pricing, we harmonize virtual and physical requirements, resulting in improved efficiency for congestion management and P2P energy exchange. We assess our approach on a modified IEEE 15-bus system and 141-bus system and compare it to previous methods. Our results show the enhanced efficiency and effective congestion resolution, particularly in the 141-bus system. In future work, we plan to minimize uncertainty risks in hour-ahead to day-ahead electricity markets.

\renewcommand{\baselinestretch}{0.85}
\bibliographystyle{IEEEbib}

\begin{thebibliography}{10}

\bibitem{SonnComm}
``{Sonnen Community},''
\newblock {\em https://sonnengroup.com/sonnencommunity/}.

\bibitem{BroMicr}
``{Brooklyn Microgrid},''
\newblock {\em https://www.brooklyn.energy}.

\bibitem{Piclo}
``{Piclo},''
\newblock {\em https://www.piclo.energy}.

\bibitem{soto2021peer}
E.~A Soto, L.~B Bosman, E.~Wollega, and W.~D Leon-Salas,
\newblock ``Peer-to-peer energy trading: A review of the literature,''
\newblock {\em Applied Energy}, vol. 283, pp. 116268, 2021.

\bibitem{aggarwal2021survey}
S.~Aggarwal, N.~Kumar, S.~Tanwar, and M.~Alazab,
\newblock ``A survey on energy trading in the smart grid: Taxonomy, research
  challenges and solutions,''
\newblock {\em IEEE Access}, vol. 9, pp. 116231--116253, 2021.

\bibitem{abeygunawardana2021grid}
A.~Abeygunawardana, A.~Liu, and G.~Ledwich,
\newblock ``A grid-friendly neighborhood energy trading mechanism,''
\newblock {\em Journal of Modern Power Systems and Clean Energy}, vol. 10, no.
  5, pp. 1349--1357, 2021.

\bibitem{azim2019feasibility}
M~I. Azim, S.~Pourmousavi, W.~Tushar, and T.~K Saha,
\newblock ``Feasibility study of financial {P2P} energy trading in a grid-tied
  power network,''
\newblock in {\em 2019 IEEE Power \& Energy Society General Meeting (PESGM)}.
  IEEE, 2019, pp. 1--5.

\bibitem{doan2021peer}
H.~T Doan, J.~Cho, and D.~Kim,
\newblock ``Peer-to-peer energy trading in smart grid through blockchain: A
  double auction-based game theoretic approach,''
\newblock {\em Ieee Access}, vol. 9, pp. 49206--49218, 2021.

\bibitem{li2021data}
J.~Li, M.~E Khodayar, J.~Wang, and B.~Zhou,
\newblock ``Data-driven distributionally robust co-optimization of {P2P} energy
  trading and network operation for interconnected microgrids,''
\newblock {\em IEEE Transactions on Smart Grid}, vol. 12, no. 6, pp.
  5172--5184, 2021.

\bibitem{khorasany2021lightweight}
M.~Khorasany, A.~Dorri, R.~Razzaghi, and R.~Jurdak,
\newblock ``Lightweight blockchain framework for location-aware peer-to-peer
  energy trading,''
\newblock {\em International Journal of Electrical Power \& Energy Systems},
  vol. 127, pp. 106610, 2021.

\bibitem{van2020integrated}
G.~van Leeuwen, T.~AlSkaif, M.~Gibescu, and W.~van Sark,
\newblock ``An integrated blockchain-based energy management platform with
  bilateral trading for microgrid communities,''
\newblock {\em Applied Energy}, vol. 263, pp. 114613, 2020.

\bibitem{doan2022optimal}
H.~T. Doan, H.~Nam, and D.~Kim,
\newblock ``Optimal peer-to-peer energy trading under load uncertainty
  incorporating carbon emission and transaction cost for grid-connected
  prosumers,''
\newblock {\em IEEE Access}, vol. 10, pp. 106202--106216, 2022.

\bibitem{fernandez2021bi}
E.~Fernandez, M.~Hossain, K.~Mahmud, M.~S.~H. Nizami, and M.~Kashif,
\newblock ``A bi-level optimization-based community energy management system
  for optimal energy sharing and trading among peers,''
\newblock {\em Journal of Cleaner Production}, vol. 279, pp. 123254, 2021.

\bibitem{luo2021multiple}
X.~Luo and Y.~Liu,
\newblock ``A multiple-coalition-based energy trading scheme of hierarchical
  integrated energy systems,''
\newblock {\em Sustainable Cities and Society}, vol. 64, pp. 102518, 2021.

\bibitem{chen2020dynamic}
S.~Chen, G.~Sun, Z.~Wei, and D.~Wang,
\newblock ``Dynamic pricing in electricity and natural gas distribution
  networks: An {EPEC} model,''
\newblock {\em Energy}, vol. 207, pp. 118138, 2020.

\bibitem{li2021real}
Y.~Li, J.~Li, J.~He, and S.~Zhang,
\newblock ``The real-time pricing optimization model of smart grid based on the
  utility function of the logistic function,''
\newblock {\em Energy}, vol. 224, pp. 120172, 2021.

\bibitem{jafari2020fair}
A.~Jafari, H.~G. Ganjehlou, T.~Khalili, and A.~Bidram,
\newblock ``A fair electricity market strategy for energy management and
  reliability enhancement of islanded multi-microgrids,''
\newblock {\em Applied Energy}, vol. 270, pp. 115170, 2020.

\bibitem{low2014convex}
Steven~H Low,
\newblock ``Convex relaxation of optimal power flow—part {I}: Formulations
  and equivalence,''
\newblock {\em IEEE Transactions on Control of Network Systems}, vol. 1, no. 1,
  pp. 15--27, 2014.

\bibitem{kim2019direct}
H.~Kim, J.~Lee, S.~Bahrami, and Vincent~W. Wong,
\newblock ``Direct energy trading of microgrids in distribution energy
  market,''
\newblock {\em IEEE Transactions on Power Systems}, vol. 35, no. 1, pp.
  639--651, 2019.

\bibitem{zhong2020cooperative}
W.~Zhong, S.~Xie, K.~Xie, Q.~Yang, and L.~Xie,
\newblock ``Cooperative {P2P} energy trading in active distribution networks:
  An {MILP}-based nash bargaining solution,''
\newblock {\em IEEE Transactions on Smart Grid}, vol. 12, no. 2, pp.
  1264--1276, 2020.

\bibitem{tofighi2022decentralized}
M.~Tofighi-Milani, S.~Fattaheian-Dehkordi, M.~Fotuhi-Firuzabad, and
  M.~Lehtonen,
\newblock ``Decentralized active power management in multi-agent distribution
  systems considering congestion issue,''
\newblock {\em IEEE Transactions on Smart Grid}, vol. 13, no. 5, pp.
  3582--3593, 2022.

\bibitem{mieth2019distribution}
R.~Mieth and Y.~Dvorkin,
\newblock ``Distribution electricity pricing under uncertainty,''
\newblock {\em IEEE Transactions on Power Systems}, vol. 35, no. 3, pp.
  2325--2338, 2019.

\bibitem{mehdinejad2022peer}
M.~Mehdinejad, H.~Shayanfar, and B.~Mohammadi-Ivatloo,
\newblock ``Peer-to-peer decentralized energy trading framework for retailers
  and prosumers,''
\newblock {\em Applied Energy}, vol. 308, pp. 118310, 2022.

\bibitem{yang2022hierarchical}
J.~Yang, J.~Dai, H.~B. Gooi, H.~D. Nguyen, and P.~Wang,
\newblock ``Hierarchical blockchain design for distributed control and energy
  trading within microgrids,''
\newblock {\em IEEE Transactions on Smart Grid}, vol. 13, no. 4, pp.
  3133--3144, 2022.

\bibitem{kim2019p2p}
J.~Kim and Y.~Dvorkin,
\newblock ``A {P2P}-dominant distribution system architecture,''
\newblock {\em IEEE Transactions on Power Systems}, vol. 35, no. 4, pp.
  2716--2725, 2019.

\bibitem{khorasany2019decentralized}
M.~Khorasany, Y.~Mishra, and G.~Ledwich,
\newblock ``A decentralized bilateral energy trading system for peer-to-peer
  electricity markets,''
\newblock {\em IEEE Transactions on industrial Electronics}, vol. 67, no. 6,
  pp. 4646--4657, 2019.

\bibitem{eseye2017grid}
A.~T. Eseye, D.~Zheng, H.~Li, and J.~Zhang,
\newblock ``Grid-price dependent optimal energy storage management strategy for
  grid-connected industrial microgrids,''
\newblock in {\em 2017 ninth annual IEEE green technologies conference
  (GreenTech)}. IEEE, 2017, pp. 124--131.

\bibitem{baroche2019prosumer}
T.~Baroche, Fabio Moret, and P.~Pinson,
\newblock ``Prosumer markets: A unified formulation,''
\newblock in {\em 2019 IEEE Milan PowerTech}. IEEE, 2019, pp. 1--6.

\bibitem{boyd2011distributed}
S.~Boyd, N.~Parikh, E.~Chu, B.~Peleato, J.~Eckstein, et~al.,
\newblock ``Distributed optimization and statistical learning via the
  alternating direction method of multipliers,''
\newblock {\em Foundations and Trends{\textregistered} in Machine learning},
  vol. 3, no. 1, pp. 1--122, 2011.

\bibitem{dvorkin2020differentially}
V.~Dvorkin, P.~Van~H., J.~Kazempour, and P.~Pinson,
\newblock ``Differentially private distributed optimal power flow,''
\newblock in {\em 2020 59th IEEE Conference on Decision and Control (CDC)}.
  IEEE, 2020, pp. 2092--2097.

\bibitem{low1999optimization}
Steven~H Low and David~E Lapsley,
\newblock ``Optimization flow control. {I}. basic algorithm and convergence,''
\newblock {\em IEEE/ACM Transactions on networking}, vol. 7, no. 6, pp.
  861--874, 1999.

\bibitem{dvorkin2020distribution}
V.~Dvorkin, F.~Fioretto, P.~Van~H., P.~Pinson, and J.~Kazempour,
\newblock ``Differentially private optimal power flow for distribution grids,''
\newblock {\em IEEE Transactions on Power Systems}, vol. 36, no. 3, pp.
  2186--2196, 2020.

\bibitem{Gurobioptimization}
``{Gurobi optimizer},''
\newblock {\em https://www.gurobi.com/}.

\bibitem{umer2021novel}
K.~Umer, Q.~Huang, M.~Khorasany, M.~Afzal, and W.~Amin,
\newblock ``A novel communication efficient peer-to-peer energy trading scheme
  for enhanced privacy in microgrids,''
\newblock {\em Applied Energy}, vol. 296, pp. 117075, 2021.

\bibitem{jain1984quantitative}
R.~K Jain, D.~W Chiu, W.~R Hawe, et~al.,
\newblock ``A quantitative measure of fairness and discrimination,''
\newblock {\em Eastern Research Laboratory, Digital Equipment Corporation,
  Hudson, MA}, vol. 21, 1984.

\end{thebibliography}

\end{document}